\newcommand{\R}{\mathbb{R}}
\newcommand{\Z}{\mathbb{Z}}
\newcommand{\E}{\mathbb{E}}
\DeclareMathOperator{\poly}{poly}
\newcommand{\be}{\begin{equation}}
\newcommand{\ee}{\end{equation}}
\newcommand{\bea}{\begin{eqnarray}}
\newcommand{\eea}{\end{eqnarray}}
\newcommand{\bes}{\begin{equation*}}
\newcommand{\ees}{\end{equation*}}
\newcommand{\beas}{\begin{eqnarray*}}
\newcommand{\eeas}{\end{eqnarray*}}
\newtheorem*{rep@theorem}{\rep@title}
\newcommand{\newreptheorem}[2]{%
\newenvironment{rep#1}[1]{%
 \def\rep@title{#2 \ref{##1} (restated)}%
 \begin{rep@theorem}}%
 {\end{rep@theorem}}}
\newcommand{\boxalgm}[3]{
\renewcommand{\figurename}{Algorithm}
\begin{figure}[tb]
\begin{center}
\noindent \framebox{
\begin{minipage}{.45\textwidth}
#3
\end{minipage}
}
\caption{#2}
\label{#1}
\end{center}
\end{figure}
\renewcommand{\figurename}{Figure}
}
\newtheorem{thm}{Theorem}
\newtheorem*{thm*}{Theorem}
\newtheorem{lem}[thm]{Lemma}
\newtheorem*{lem*}{Lemma}
\begin{document}

\title{Quantum speedup of branch-and-bound algorithms}

\author{Ashley Montanaro}
\affiliation{School of Mathematics, University of Bristol, Bristol, UK.}
\email{ashley.montanaro@bristol.ac.uk}

\begin{abstract}
Branch-and-bound is a widely used technique for solving combinatorial optimisation problems where one has access to two procedures: a branching procedure that splits a set of potential solutions into subsets, and a cost procedure that determines a lower bound on the cost of any solution in a given subset. Here we describe a quantum algorithm that can accelerate classical branch-and-bound algorithms near-quadratically in a very general setting. We show that the quantum algorithm can find exact ground states for most instances of the Sherrington-Kirkpatrick model in time $O(2^{0.226n})$, which is substantially more efficient than Grover's algorithm.
\end{abstract}

\maketitle



Quantum computers can solve certain problems, such as simulation of quantum-mechanical systems~\cite{georgescu14} and integer factorisation~\cite{shor97}, exponentially faster than the best classical algorithms known. As well as these special-purpose algorithms, there are general-purpose quantum algorithms which can outperform their classical counterparts more modestly for a wide range of problems within the domains of constraint satisfaction and optimisation~\cite{montanaro16a}. A famous example is Grover's algorithm for unstructured search~\cite{grover97}, which can be applied to find the minimal value in a set of size $N$ with $O(\sqrt{N})$ evaluations of values in the set~\cite{durr96}. This algorithm achieves a quadratic speedup over exhaustive classical search. However, for many problems encountered in practice, there are more efficient classical algorithms than exhaustive search, which take advantage of the structure of the problem. This can apply to NP-complete problems, which are expected not to have polynomial-time algorithms, yet which can sometimes be solved surprisingly efficiently in practice.

One of the most successful general approaches to solving constrained optimisation problems is known as {\em branch-and-bound}. This approach can be applied to problems where the goal is to find a minimal-cost valid solution, in a setting where one has access to two functions: a bounding function {\tt cost} that, for a given subset of the set of possible solutions, returns a lower bound on the cost of any valid solution in that subset; and a branching rule {\tt branch} to be applied if a subset of possible solutions cannot yet be ruled out, which will divide that subset into two or more ``live'' subsets to be explored in later iterations. Then the branch-and-bound approach explores subsets of potential solutions, ruling out those where the cost of any valid solution is too high (e.g.\ higher than the lowest cost of a valid solution found so far). The goal is to use this additional information to avoid exploring every possible solution. Clearly, this approach can equivalently be applied to problems where the goal is to maximise the value of a valid solution.

Although the number of subsets produced by the branching steps can grow exponentially, implying an exponential running time, algorithms based on this technique can sometimes find exact solutions to instances of hard optimisation problems substantially beyond the reach of unstructured search. For example, integer linear programming problems can be solved using a branch-and-bound approach where the {\tt cost} function is based on relaxing to linear programming problems (see Appendix \ref{app:ilp} for more details).

Here we describe a general quantum approach to accelerate classical branch-and-bound algorithms almost quadratically. The quantum branch-and-bound algorithm can be applied to speed up any classical algorithm that fits into the branch-and-bound paradigm (formally defined below). These include algorithms for integer linear programming; nonlinear programming; the travelling salesman problem; and more~\cite{lawler66}.

The quantum algorithm is based on quantum subroutines that speed up a related class of classical algorithms: backtracking algorithms~\cite{montanaro18,ambainis17}. Backtracking is an approach that solves constraint satisfaction problems given the ability to determine whether a partial solution to the problem could be extended to a full solution. Backtracking algorithms can be interpreted as the special case of branch-and-bound algorithms where the {\tt cost} function either returns 0 (for a valid solution to the problem) or $\infty$ (for an invalid potential solution, or a partial solution that cannot be extended to a full solution), so the algorithm described here can be seen as generalising the results of~\cite{montanaro18,ambainis17}.

This result contrasts with other quantum approaches to solve hard optimisation problems, such as the adiabatic algorithm~\cite{farhi00} and the quantum approximate optimisation algorithm~\cite{farhi14}, in that the branch-and-bound algorithm guarantees to find the minimal-cost solution with arbitrarily high probability; however, for certain problems its running time can be long (e.g.\ exponential in the input size).

One area where branch-and-bound algorithms have been successfully applied classically is finding ground states of spin models~\cite{kobe78,hartwig80,palassini03,packebusch16}. For example, branch-and-bound has been used to find the largest exact ground states known of instances of the Bernasconi model~\cite{bernasconi87}, corresponding to binary sequences with minimal autocorrelation. The fastest known branch-and-bound algorithms for this model have runtime estimated numerically as approximately $O(2^{0.79n})$~\cite{packebusch16}; the quantum branch-and-bound algorithm would improve this scaling to approximately $O(2^{0.4n})$.

Another spin model addressed using branch-and-bound is the well-studied Sherrington-Kirkpatrick (S-K) model~\cite{sherrington75}, which is the family of classical Hamiltonians $H(x) = \sum_{1 \le i < j \le n} a_{ij} x_i x_j$ where $x \in \{\pm 1\}^n$ and $a_{ij}$ are distributed according to the normal distribution $N(0,1)$. Finding the lowest-energy state for such a Hamiltonian can be achieved in time $O(2^{n/2} \poly(n))$ using Grover's algorithm. Recently, Callison et al.~\cite{callison19} have described an intriguing quantum algorithm that uses quantum walks to solve the S-K model, and gave evidence based on numerical experiments for small $n$ that the runtime of the algorithm should be approximately $O(2^{0.41n})$. Here we apply the quantum branch-and-bound algorithm to speed up a simple classical branch-and-bound algorithm~\cite{hartwig80} for the S-K model, and show rigorously that the runtime of the quantum algorithm is $O(2^{0.226n})$ on most instances of size $n$, which is substantially more efficient than Grover search. Numerical evidence suggests that the runtime in practice could be as low as $O(2^{0.186n})$.

Branch-and-cut methods have also been used to find ground states of spin glasses~\cite{palassini03}; these could be accelerated using the same quantum approach.


{\bf Model for branch-and-bound algorithms.} For a problem to be accessible to branch-and-bound, we must have access to {\tt cost} and {\tt branch} procedures. Each takes as input a subset $S$ of potential solutions to an optimisation problem, perhaps chosen from a restricted family of subsets. {\tt cost}$(S)$ returns a lower bound on the cost of any solution within $S$. {\tt branch}$(S)$ either returns that $S$ only contains one element, or splits $S$ into two or more disjoint sets $S_1,\dots,S_k$. For simplicity, here we assume that $k$ is a fixed constant. For the behaviour of {\tt cost} to be reasonable, we must have {\tt cost}$(S) \ge $ {\tt cost}$(S')$ whenever $S \subseteq S'$. We assume that for all subsets $S$, either {\tt cost}$(S) \in [0,c_{\max}]$ for some known $c_{\max}$, or {\tt cost}$(S) = \infty$, where the latter corresponds to $S$ containing no valid solutions. We can assume essentially without loss of generality that {\tt cost} is integer-valued; real-valued cost functions can be handled by truncating their output to precision $\delta$, and multiplying all costs by $1/\delta$. In this situation $c_{\max}$ effectively acts as a precision parameter.

An abstraction of the problem of finding a minimal-cost solution to a problem given access to the {\tt cost} and {\tt branch} procedures is the following model of search within trees~\cite{karp86}, illustrated in Figure \ref{fig:tree}. The search space is described by a rooted tree where each node $v$ is either labelled with an integer $c(v)$ between 0 and $c_{\max}$ or with $\infty$, and satisfies the promise that if $w$ is a child of $v$, then $c(w) \ge c(v)$. We are given query access to two oracles, each of which takes as input a node $v$. One oracle returns $c(v)$ and the other returns the children of $v$, if there are any. The goal is to find a leaf node $v$ such that $c(v)$ is minimised, while making the minimal number of queries.

In this abstraction, a node in the tree represents a subset of possible solutions (with a leaf representing a single possible solution), and its label represents a lower bound on the cost of any solution in that subset. A label of $\infty$ represents that there is no valid solution in that subset. Revealing the children of a node corresponds to splitting a set of potential solutions into subsets. Note that the backtracking approach for solving constraint satisfaction problems corresponds to exploring a tree of this form where each node is labelled either with 0 or $\infty$.

The best classical strategy for solving the search problem in such a tree is to maintain a list of ``live'' nodes (those whose children have not yet been explored), and always to choose to explore the live node with the lowest cost~\cite{karp86}. This strategy is known as best-first search. Although it is optimal in terms of query complexity, it may require a very large amount of space to store the list of live nodes, so in practice other strategies may be preferred (such as depth-first search); strategies which achieve near-optimal query complexity in limited space are known~\cite{karp86}.

We will consider particular subtrees of the overall tree, obtained by truncating it at a particular cost $c$, i.e.\ deleting all nodes whose labels are greater than $c$. This is equivalent to changing the {\tt cost} function to a {\tt cost'} function such that {\tt cost'}$(x) = \infty$ if {\tt cost}$(x) > c$. This transformation clearly preserves the tree structure and the monotonicity of the {\tt cost} function. We also have that truncating the tree at a cost $c_{\min}$ equal to the minimal cost of a valid solution preserves the presence of a minimal-cost valid solution in the tree.

As observed by Karp and Zhang~\cite{karp93}, this kind of truncation controls the complexity of a class of classical search algorithms: Any algorithm which outputs all the minimal-cost leaves must explore the entire tree truncated at cost $c_{\min}$. Otherwise, it could not be sure that it had found all the minimal-cost leaves. In particular, if all solutions have distinct costs, any algorithm which outputs the minimal-cost solution must explore the whole tree truncated at that cost.

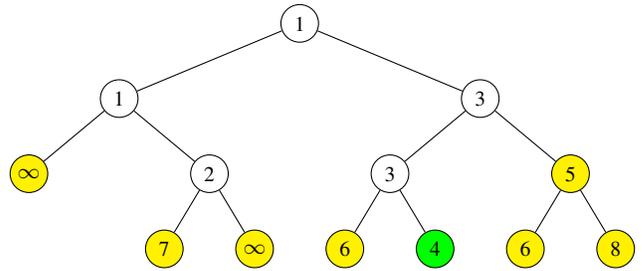
\begin{figure}
  \begin{tikzpicture}[xscale=1.2,every node/.style={font=\footnotesize,circle,fill=white,draw,minimum size=0.5cm,inner sep=0.4mm}]
  \node (1) at (0,0) {1};
  \node (20) at (-2,-1) {1};  \node (21) at (2,-1) {3};
  \node[fill=yellow] (300) at (-3,-2) {$\infty$};  \node (301) at (-1,-2) {2};  \node (310) at (1,-2) {3};  \node[fill=yellow] (311) at (3,-2) {5};
 \node[fill=yellow] (4010) at (-1.5,-3) {7};   \node[fill=yellow] (4011) at (-0.5,-3) {$\infty$};   \node[fill=yellow] (4100) at (0.5,-3) {6};    \node[fill=green]  (4101) at (1.5,-3) {4};    \node[fill=yellow] (4110) at (2.5,-3) {6};    \node[fill=yellow]  (4111) at (3.5,-3) {8}; 
  \draw (1) -- (20); \draw (1) -- (21); 
  \draw (20) -- (300); \draw (20) -- (301); \draw (21) -- (310); \draw (21) -- (311);
 \draw (301) -- (4010); \draw (301) -- (4011);  \draw (310) -- (4100); \draw (310) -- (4101);    \draw (311) -- (4110); \draw (311) -- (4111);  
   \end{tikzpicture}
\caption{An example tree corresponding to a branch-and-bound algorithm. Nodes are labelled with their cost bounds, which are non-decreasing on any path from the root to a leaf. The green node is the optimal solution; yellow nodes are removed if the tree is truncated at the optimal cost.}
\label{fig:tree}
\end{figure}


{\bf Statement of results.} We can now state our main result. Let $\mathcal{T}$ be the tree corresponding to a branch-and-bound algorithm $\mathcal{A}$. Let $d$ be the depth of $\mathcal{T}$, let $c_{\min}$ be the minimal cost of a valid solution, and let $T_{\min}$ be the size of the truncated tree with cost bound $c_{\min}$. (If there is no solution, $c_{\min} = \infty$, and $T_{\min}$ is the size of the whole tree $\mathcal{T}$.) Fix a constant $\epsilon > 0$. Then there is a quantum algorithm which uses
\[ \widetilde{O}\left( \sqrt{T_{\min}} d^{3/2} \log c_{\max} \right)  \]
calls to {\tt cost} and {\tt branch}, and except with failure probability at most $\epsilon$, returns a solution with minimal cost, if one exists, and otherwise returns ``no solution''. The $\widetilde{O}$ notation hides polylogarithmic factors in $d$, $1/\epsilon$ and $\log c_{\max}$.

Assuming that $T_{\min} \gg \poly(d)$, $T_{\min} \gg \log c_{\max}$, this is roughly a quadratic speedup over any possible classical branch-and-bound search algorithm which finds {\em all} minimal-cost solutions in the tree corresponding to $\mathcal{A}$, whose complexity (as discussed above) is lower-bounded by $T_{\min}$. In the usual case where there is a unique minimal-cost solution, the speedup is approximately quadratic over any possible classical branch-and-bound algorithm (i.e.\ one that uses only the {\tt cost} and {\tt branch} procedures).


{\bf Quantum branch-and-bound algorithm.} In order to state the quantum branch-and-bound algorithm, we will need two quantum-algorithmic ingredients. Both relate to determining properties of trees, in a model where only the root node is known in advance, and the structure of the tree can only be revealed via ``oracle'' queries to nodes. A query to a node reveals the neighbours of the node and whether the node is marked. This model disallows the straightforward use of methods such as Grover's algorithm to search in the tree.

The first ingredient is {\em quantum tree search}. It follows from~\cite{montanaro18,belovs13,belovs13a} that there is a quantum algorithm which, given $\epsilon$ and oracle access to a tree with depth at most $d$, $T$ nodes and maximal degree $k = O(1)$, makes
\be \label{eq:treesearch} O\left(\sqrt{T} d^{3/2} \log d \log(1/\epsilon) \right) \ee
queries and performs $O(1)$ other elementary operations per query, and except with failure probability at most $\epsilon$:
\begin{itemize}
\item If the tree contains at least one marked node, the algorithm returns the label of a marked node;
\item If the tree does not contain any marked nodes, the algorithm returns ``not found''.
\end{itemize}

Jarret and Wan have described an algorithm~\cite{jarret18} which solves the same tree search problem with complexity bounded by $O(\sqrt{Td} \log^4(md)\log(m/\epsilon) )$, where $m$ is the number of marked nodes. This is an improvement on (\ref{eq:treesearch}) by up to a factor of almost $d$ when $m$ is small. However, if $m$ is very large (e.g.\ exponential in $n$), this complexity bound can be larger than (\ref{eq:treesearch}). Here, we will need to apply quantum tree search in a setting where we have no upper bound on $m$, which could be as large as $T$, which in turn can be exponentially large in $n$. So we will state complexity bounds based on (\ref{eq:treesearch}), though in some cases (e.g.\ when $T$ is small) the algorithm of Jarret and Wan may be more efficient.

When we use the quantum tree search algorithm we will have access to an upper bound on the size of the tree. Although, as stated above, having access to such an upper bound is not necessary, it can be seen by inspecting the proof of correctness in~\cite{montanaro18} that the algorithm is simplified somewhat given this additional information. However, this does not affect its asymptotic complexity.

The second ingredient we will need is {\em quantum tree size estimation}~\cite{ambainis17}. The quantum tree size estimation algorithm, given query access to a tree with depth at most $d$, $T$ nodes and maximal degree $k = O(1)$, and parameters $T_0$, $\epsilon$, $\delta$, makes
\[ O\left(\frac{\sqrt{T_0d}}{\delta^{3/2}} \log^2 (1/\epsilon) \right) \]
queries and performs $O(\log T_0)$ other elementary operations per query, and except with failure probability at most $\epsilon$:
\begin{itemize}
\item If $T \le T_0 / (1+\delta)$, the algorithm outputs $\widetilde{T} \le T_0$ such that $|\widetilde{T}-T| \le \delta T$;
\item If $T > (1+\delta) T_0$, the algorithm outputs ``contains more than $T_0$ nodes''.
\end{itemize}

The restriction to $k = O(1)$ in these algorithms is not significant, as any node with degree $k$ can be replaced with a binary tree of depth $O(\log k)$. Hence, if each node had degree at most $k$, the tree size $T$ would increase by a constant factor and the depth $d$ would increase by an $O(\log k)$ factor, corresponding to the complexity bounds increasing by a factor polylogarithmic in $d$.

Let Count$_c(T_0,\epsilon,\delta)$ denote the quantum tree size estimation algorithm applied to count the number of nodes in the truncated tree corresponding to the backtracking algorithm with cost bound $c$, where $d$ and $k$ are fixed; and similarly let Search$_c(\epsilon)$ denote the quantum tree search algorithm applied to search in the truncated tree with cost bound $c$, where a node is marked if it corresponds to a valid solution to the original problem (i.e.\ was a leaf with finite cost in the original tree). Assume that we have an upper bound $c_{\max}$ such that all valid solutions have cost strictly less than $c_{\max}$, and an upper bound $T_{\max} \le k^d$ on the size of the tree; further assume for simplicity that $c_{\max}$ is a power of 2.

\boxalgm{alg:qbb}{Quantum branch-and-bound algorithm with failure parameter $\epsilon$.}{
\begin{enumerate}
\item Set $T \leftarrow 1$, $c_{\text{old}} \leftarrow 0$, $\epsilon' \leftarrow \epsilon / (K d \log_2 c_{\max})$ for some sufficiently large constant $K$.
\item While $T \le T_{\max}$:
\begin{enumerate}
\item If $T > T_{\max}/2$, $c_{\text{new}} \leftarrow c_{\max}$. Otherwise, set $c_{\text{new}} \leftarrow 0$ and for $i=1$ to $\log_2 c_{\max}$:
\begin{enumerate}
\item If Count$_{c_{\text{new}} + c_{\max}/2^i}(T,\epsilon',1/2)$ does not return ``contains more than $T$ nodes'', set $c_{\text{new}} \leftarrow c_{\text{new}} + c_{\max}/2^i$.
\end{enumerate}
\item Run Search$_{c_{\text{new}}}(\epsilon')$. If it returns the label of a solution, use binary search on $c$ between $c_{\text{old}}$ and $c_{\text{new}}$ within Search$_c(\epsilon')$ to find the minimal $c$ such that a solution with cost $c$ exists, and return that solution.
\item Set $T \leftarrow 2T$, $c_{\text{old}} \leftarrow c_{\text{new}}$.
\end{enumerate}
\item Return ``no solution''
\end{enumerate}
}

Then the quantum branch-and-bound algorithm is stated formally as Algorithm \ref{alg:qbb}. The intuitive idea behind it is as follows: we want to find a cost $c \ge c_{\min}$ such that the size $T_c$ of the tree truncated at cost $c$ is not much greater than $T_{\min}$. Given such a $c$, we can find a minimal-cost solution within the tree using Search$_c$, with a complexity of $O(\sqrt{T_{\min}} \poly(d))$ queries. And to find such a $c$ efficiently, we can perform a binary search on $c$ using Count$_c$, to find the maximal $c$ such that the tree size is smaller than some upper bound, which is not much greater than $T_{\min}$. There is the technicality that Count$_c(T,\epsilon,\delta)$ may not return the correct answer for some choices of $c$ and $T$. However, this turns out not to affect the correctness or performance of the binary search, because Count$_c(T,\epsilon,\delta)$ does always return the correct answer when the size of the tree truncated at $c$ is sufficiently small with respect to $T$. The formal proof of correctness and runtime of the algorithm is deferred to Appendix \ref{app:qbb}.


{\bf Sherrington-Kirkpatrick spin glass.} The Sherrington-Kirkpatrick (S-K) model~\cite{sherrington75} is the family of classical Hamiltonians $H(x) = \sum_{1 \le i < j \le n} a_{ij} x_i x_j$ where $x \in \{\pm 1\}^n$ and $a_{ij}$ are distributed according to the normal distribution $N(0,1)$. Let $A = (a_{ij})$ denote the corresponding square matrix where $a_{ij} = 0$ for $i \ge j$. Given a Hamiltonian $H$ described by the matrix $A$, our computational task is to determine $E_{\min}(H) := \min_x H(x)$.

Determination of the limiting form of the expected ground state energy $E_{\min} := \E_H[E_{\min}(H)]$ as $n \rightarrow \infty$ has been a question of extensive interest within the theory of spin glasses. A precise limiting expression is known for this quantity, which evaluates to $E_{\min} = (-0.763167\dots + o(1))n^{3/2}$. This formula was conjectured by Parisi~\cite{parisi80} and later proven correct by Talagrand~\cite{talagrand06}. Although an explicit expression, evaluating it numerically is non-trivial; however, the constant factor is now known to many digits of precision~\cite{crisanti02,schmidt08}.

Finding the ground state energy of general Ising model Hamiltonians (of the form of $H$ with arbitrary coefficients $a_{ij}$) is NP-hard, which holds even given locality restrictions on the pairs $i$, $j$ such that $a_{ij} \neq 0$~\cite{barahona82}. The S-K model is a natural family of Ising model Hamiltonians, and has been a target for a number of different algorithmic approaches, both heuristic (e.g.~\cite{boettcher05,pelikan08}) and exact~\cite{kobe78,hartwig80,kobe03}. Although it was recently proven that ground state energies of the S-K model can be approximated efficiently~\cite{montanari18}, there is no known efficient (i.e.\ polynomial-time in $n$) method to compute them exactly.

There is a straightforward exact approach to computing $E_{\min}(H)$ which fits into the branch-and-bound paradigm and was proposed in~\cite{hartwig80}. Variables $x_1,\dots,x_n \in \{\pm1\}$ are assigned values sequentially. To determine a lower bound on the cost of any assignment beginning with a partial assignment $x_1,\dots,x_\ell$, we observe that
\begin{multline*}
H(x_1,\dots,x_\ell,z_{\ell+1},\dots,z_n) =\\ \sum_{1 \le i < j \le \ell} a_{ij} x_i x_j + \sum_{i=1}^{\ell} \sum_{j =\ell+1}^n a_{ij} x_i z_j +\!\!\!\! \sum_{\ell+1 \le i < j \le n} \!\!\!\! a_{ij} z_i z_j
\end{multline*}
and hence
\beas
&& \min_z H(x_1,\dots,x_\ell,z_{\ell+1},\dots,z_n)\\
&=& \!\!\!\!\sum_{1 \le i < j \le \ell} \!\!\!\!a_{ij} x_i x_j + \min_z \left( \sum_{i=1}^{\ell} \sum_{j =\ell+1}^n a_{ij} x_i z_j +\!\!\!\!\!\! \sum_{\ell+1 \le i < j \le n}\!\!\!\!\!\! a_{ij} z_i z_j \right)\\
\label{eq:skbound} &\ge& \!\!\!\! \sum_{1 \le i < j \le \ell} \!\!\!\! a_{ij} x_i x_j - \sum_{j =\ell+1}^n \left| \sum_{i=1}^{\ell} a_{ij} x_i \right| + \min_z \!\!\!\!\sum_{\ell+1 \le i < j \le n} \!\!\!\!a_{ij} z_i z_j\\
&=:& \text{Bound}_A(x).
\eeas
The first and second components of Bound$_A(x)$ can be computed from $x$ in time $O(n^2)$. The third component is equal to $\min_x H'(x)$, where $H'$ is formed from $H$ by deleting the first $\ell$ rows and columns of the matrix $A=(a_{ij})$. Thus, if we have already solved the $n-1$ minimisation problems obtained by restricting $A$ to the corresponding submatrices, we can compute Bound$_A(x)$ efficiently. As the runtime of the overall algorithm is exponential in $n$, this additional cost does not significantly affect its overall complexity.

This algorithm fits into the formal model for branch-and-bound discussed above in a straightforward way: the branch-and-bound tree is a binary tree, where nodes at depth $\ell$ correspond to $\ell$-bit strings, for $0 \le \ell \le n$, and the {\tt cost} function is Bound$_A$. So the quantum branch-and-bound algorithm can immediately be applied to accelerate it quadratically. Note that {\tt cost} should be non-negative and integer-valued; this can be achieved by truncating the real values $a_{ij}$ after $p$ binary digits to produce a new Hamiltonian $\widetilde{H}$, rescaling by $2^p$, and performing an overall energy shift. It is sufficient to take $p = O(\log n)$ to achieve $E_{\min}(\widetilde{H}) = E_{\min}(H) + o(1)$. This corresponds to a cost bound for the integer-valued problem $c_{\max} = \poly(n)$.

Let $T_A$ denote the size of the branch-and-bound tree corresponding to the Hamiltonian $H$ with matrix $A$, using cost function Bound$_A$, truncated optimally at cost $E_{\min}(H)$. In Appendix \ref{app:sktree} we prove that for sufficiently large $n$, $\Pr_A[T_A \ge 2^{0.451n}] \le 0.01$. So for the vast majority of instances of size $n$, the runtime of the classical branch-and-bound algorithm is $O(2^{0.451n})$ steps, which is already faster than Grover's algorithm. The runtime of the quantum branch-and-bound algorithm on these instances is $O( 2^{0.226n} )$ steps, observing that the depth $d$ of the branch-and-bound tree is equal to $n$, and $\log c_{\max} = O(\log n)$, so these only contribute lower-order terms to the complexity. The constant $0.01$ could be made arbitrarily small.

This rigorous result is only an upper bound on the tree size of the classical algorithm (and by extension the runtime of the quantum algorithm), which may not be tight. To investigate this, a depth-first variant of the classical algorithm was implemented and run on instances of the S-K model for $n \le 50$. Extrapolation of the tree sizes obtained suggests that the true scaling of $T_A$ for random $A$ is $O(2^{0.371n})$, which would correspond to a quantum runtime of $O(2^{0.186n})$. See Appendix \ref{app:numerics} for further details.


{\bf Conclusions and further work.} We have described a quantum algorithm which can accelerate classical branch-and-bound algorithms in a very general setting. We finish by discussing potential routes to improving the results of this work. First, in some cases the square-root dependence of the runtime on $T_{\min}$ cannot be improved; even in the special case where the {\tt cost} function either evaluates to 0 or $\infty$, there are depth-$d$ trees for which the quantum algorithm requires $\Omega(\sqrt{T_{\min} d})$ queries to determine whether there is a leaf of cost 0~\cite{aaronson05}. However, it might be possible to improve the depth dependence by a factor of up to $d$, e.g.\ by extending ideas of~\cite{jarret18}.

Ambainis and Kokainis~\cite{ambainis17} have given a quantum algorithm which, given a deterministic classical algorithm that explores a search tree to find a marked node and uses $Q$ queries to do so, can find a marked node using $\widetilde{O}(\sqrt{Q}d^{3/2})$ queries. This improves on the quantum tree search algorithm used here in that its complexity depends on $Q$, rather than the size $T$ of the whole tree. It seems unclear whether this algorithm could be applied to accelerate Algorithm \ref{alg:qbb} directly, for at least two reasons. First, the algorithm as designed also uses the Count subroutine, whose runtime depends on the entire tree size. Second, in practice any classical or quantum branch-and-bound algorithm is likely to explore the whole tree of potential solutions, in order to ensure that it has not omitted any solutions with lower cost than the current best solution.

Finally, in order to determine whether the quantum branch-and-bound algorithm will genuinely outperform the best classical methods for problems of practical interest, once all overheads are taken into account, a more detailed analysis of the algorithm's runtime should be undertaken, extending previous analysis for backtracking~\cite{campbell18}.


{\bf Acknowledgements.} I acknowledge support from the QuantERA ERA-NET Cofund in Quantum Technologies implemented within the European Union's Horizon 2020 Programme (QuantAlgo project), EPSRC Early Career Fellowship EP/L021005/1 and EPSRC/InnovateUK grant EP/R020426/1. This project has received funding from the European Research Council (ERC) under the European Union's Horizon 2020 research and innovation programme (grant agreement No.\ 817581). I would like to thank Viv Kendon for insight into the results of~\cite{callison19}, and Stephen Piddock for helpful discussions throughout.


\bibliographystyle{plain}
\bibliography{../../thesis}

\begin{thebibliography}{10}

\bibitem{aaronson05}
S.~Aaronson and A.~Ambainis.
\newblock Quantum search of spatial regions.
\newblock {\em Theory of Computing}, 1:47--79, 2005.
\newblock {\tt quant-ph/0303041}.

\bibitem{ambainis17}
A.~Ambainis and M.~Kokainis.
\newblock Quantum algorithm for tree size estimation, with applications to
  backtracking and 2-player games.
\newblock In {\em Proc. 49\textsuperscript{th} Annual ACM Symp. Theory of
  Computing}, pages 989--1002, 2017.

\bibitem{apeldoorn18}
J.~van Apeldoorn and A.~Gily\'en.
\newblock Improvements in quantum {SDP}-solving with applications, 2018.
\newblock {\tt arXiv:1804.05058}.

\bibitem{apeldoorn17}
J.~van Apeldoorn, A.~Gily\'en, S.~Gribling, and R.~de~Wolf.
\newblock Quantum sdp-solvers: Better upper and lower bounds.
\newblock In {\em Proc. 58\textsuperscript{th} Annual Symp. Foundations of
  Computer Science}, pages 403--414, 2017.
\newblock {\tt arXiv:1705.01843}.

\bibitem{barahona82}
F.~Barahona.
\newblock On the computational complexity of {I}sing spin glass models.
\newblock {\em J. Phys. A: Math. Gen.}, 15(10):3241, 1982.

\bibitem{belovs13}
A.~Belovs.
\newblock Quantum walks and electric networks, 2013.
\newblock {\tt arXiv:1302.3143}.

\bibitem{belovs13a}
A.~Belovs, A.~Childs, S.~Jeffery, R.~Kothari, and F.~Magniez.
\newblock Time-efficient quantum walks for 3-distinctness.
\newblock In {\em Proc. 40\textsuperscript{th} {I}nternational {C}onference on
  {A}utomata, {L}anguages and {P}rogramming (ICALP'13)}, pages 105--122, 2013.

\bibitem{bernasconi87}
J.~Bernasconi.
\newblock Low autocorrelation binary sequences: statistical mechanics and
  configuration space analysis.
\newblock {\em J. Phys. France}, 48:559--567, 1987.

\bibitem{boettcher05}
S.~Boettcher.
\newblock Extremal optimization for {Sherrington-Kirkpatrick} spin glasses.
\newblock {\em Eur. Phys. J. B}, 46(4):501--505, 2005.

\bibitem{brandao19}
F.~Brand{\~a}o, A.~Kalev, T.~Li, C.~Y.-Y. Lin, K.~Svore, and X.~Wu.
\newblock Quantum {SDP} solvers: Large speed-ups, optimality, and applications
  to quantum learning.
\newblock In {\em Proc. 46\textsuperscript{th} {I}nternational {C}onference on
  {A}utomata, {L}anguages and {P}rogramming (ICALP'19)}, page to appear, 2019.
\newblock {\tt arXiv:1710.02581}.

\bibitem{brandao17}
F.~Brand{\~a}o and K.~Svore.
\newblock Quantum speed-ups for semidefinite programming.
\newblock In {\em Proc. 58\textsuperscript{th} Annual Symp. Foundations of
  Computer Science}, pages 415--426, 2017.
\newblock {\tt arXiv:1609.05537}.

\bibitem{callison19}
A.~Callison, N.~Chancellor, F.~Mintert, and V.~Kendon.
\newblock Finding spin-glass ground states using quantum walks, 2019.
\newblock {\tt arXiv:1903.05003}.

\bibitem{campbell18}
E.~Campbell, A.~Khurana, and A.~Montanaro.
\newblock Applying quantum algorithms to constraint satisfaction problems,
  2018.
\newblock {\tt arXiv:1810.05582}.

\bibitem{chen10}
D.~Chen, R.~Batson, and Y.~Dang.
\newblock {\em Applied Integer Programming: Modeling and Solution}.
\newblock Wiley, 2010.

\bibitem{crisanti02}
A.~Crisanti and T.~Rizzo.
\newblock Analysis of the {$\infty$}-replica symmetry breaking solution of the
  {S}herrington-{K}irkpatrick model.
\newblock {\em Phys. Rev. E}, 65:046137, 2002.
\newblock {\tt cond-mat/0111037}.

\bibitem{dubhashi09}
D.~Dubhashi and A.~Panconesi.
\newblock {\em Concentration of measure for the analysis of randomized
  algorithms}.
\newblock Cambridge University Press, 2009.

\bibitem{durr96}
C.~D{\"u}rr and P.~H{\o }yer.
\newblock A quantum algorithm for finding the minimum, 1996.
\newblock {\tt quant-ph/9607014}.

\bibitem{farhi14}
E.~Farhi, J.~Goldstone, and S.~Gutmann.
\newblock A quantum approximate optimization algorithm, 2014.
\newblock {\tt arXiv:1411.4028}.

\bibitem{farhi00}
E.~Farhi, J.~Goldstone, S.~Gutmann, and M.~Sipser.
\newblock Quantum computation by adiabatic evolution.
\newblock Technical Report MIT-CTP-2936, MIT, 2000.
\newblock {\tt quant-ph/0001106}.

\bibitem{georgescu14}
I.~Georgescu, S.~Ashhab, and F.~Nori.
\newblock Quantum simulation.
\newblock {\em Rev. Mod. Phys.}, 86:153, 2014.
\newblock {\tt arXiv:1308.6253}.

\bibitem{grover97}
L.~Grover.
\newblock Quantum mechanics helps in searching for a needle in a haystack.
\newblock {\em Phys. Rev. Lett.}, 79(2):325--328, 1997.
\newblock {\tt quant-ph/9706033}.

\bibitem{hartwig80}
A.~Hartwig, F.~Daske, and S.~Kobe.
\newblock A recursive branch-and-bound algorithm for the exact ground state of
  {I}sing spin-glass models.
\newblock {\em Computer Physics Communications}, 32(2):133--188, 1980.

\bibitem{jarret18}
M.~Jarret and K.~Wan.
\newblock Improved quantum backtracking algorithms using effective resistance
  estimates.
\newblock {\em Phys. Rev. A}, 97:022337, 2018.
\newblock {\tt arXiv:1711.05295}.

\bibitem{karp86}
R.~Karp, M.~Saks, and A.~Wigderson.
\newblock On a search problem related to branch-and-bound procedures.
\newblock In {\em Proc. 27\textsuperscript{th} Annual Symp. Foundations of
  Computer Science}, pages 19--28, 1986.

\bibitem{karp93}
R.~Karp and Y.~Zhang.
\newblock Randomized parallel algorithms for backtrack search and
  branch-and-bound computation.
\newblock {\em Journal of the ACM}, 40(3):765--789, 1993.

\bibitem{kerenidis18}
I.~Kerenidis and A.~Prakash.
\newblock A quantum interior point method for {LPs} and {SDPs}, 2018.
\newblock {\tt arXiv:1808.09266}.

\bibitem{kobe03}
S.~Kobe.
\newblock Ground-state energy and frustration of the {Sherrington-Kirkpatrick}
  model and related models, 2003.
\newblock {\tt cond-mat/0311657}.

\bibitem{kobe78}
S.~Kobe and A.~Hartwig.
\newblock Exact ground state of amorphous {I}sing systems.
\newblock {\em Computer Physics Communications}, 16:1--4, 1978.

\bibitem{lalley13}
S.~Lalley.
\newblock Concentration inequalities, 2013.
\newblock
  \url{https://galton.uchicago.edu/~lalley/Courses/386/Concentration.pdf}.

\bibitem{lawler66}
E.~Lawler and D.~Wood.
\newblock Branch-and-bound methods: A survey.
\newblock {\em Operations Research}, 14(4):699--719, 1966.

\bibitem{montanari18}
A.~Montanari.
\newblock Optimization of the {Sherrington-Kirkpatrick Hamiltonian}, 2018.
\newblock {\tt arXiv:1812.10897}.

\bibitem{montanaro16a}
A.~Montanaro.
\newblock Quantum algorithms: an overview.
\newblock {\em npj Quantum Information}, 2:15023, 2016.
\newblock {\tt arXiv:1511.04206}.

\bibitem{montanaro18}
A.~Montanaro.
\newblock Quantum-walk speedup of backtracking algorithms.
\newblock {\em Theory of Computing}, 14(15):1--24, 2018.
\newblock {\tt arXiv:1509.02374}.

\bibitem{moylett17}
D.~Moylett, N.~Linden, and A.~Montanaro.
\newblock Quantum speedup of the traveling-salesman problem for bounded-degree
  graphs.
\newblock {\em Phys. Rev. A}, 95(032323), 2017.
\newblock {\tt arXiv:1612.06203}.

\bibitem{packebusch16}
T.~Packebusch and S.~Mertens.
\newblock Low autocorrelation binary sequences.
\newblock {\em J. Phys. A: Math. Gen.}, 49(16):165001, 2016.
\newblock {\tt 1512.02475}.

\bibitem{palassini03}
M.~Palassini, F.~Liers, M.~Juenger, and A.~Young.
\newblock Low energy excitations in spin glasses from exact ground states.
\newblock {\em Phys. Rev. B}, 68:064413, 2003.
\newblock {\tt cond-mat/0212551}.

\bibitem{parisi80}
G.~Parisi.
\newblock A sequence of approximated solutions to the {S-K} model for spin
  glasses.
\newblock {\em J. Phys. A: Math. Gen.}, 13(4):L115--L121, 1980.

\bibitem{pelikan08}
M.~Pelikan, H.~Katzgraber, and S.~Kobe.
\newblock Finding ground states of {S}herrington-{K}irkpatrick spin glasses
  with hierarchical {BOA} and genetic algorithms.
\newblock In {\em Proc. GECCO'08: Proceedings of the 10th annual conference on
  Genetic and evolutionary computation}, pages 447--454, 2008.
\newblock {\tt arXiv:0801.2997}.

\bibitem{schmidt08}
M.~Schmidt.
\newblock {\em Replica Symmetry Breaking at Low Temperatures}.
\newblock PhD thesis, Universit{\"a}t W{\"u}rzburg, 2008.

\bibitem{sherrington75}
D.~Sherrington and S.~Kirkpatrick.
\newblock Solvable model of a spin-glass.
\newblock {\em Phys. Rev. Lett.}, 35:1792, 1975.

\bibitem{shor97}
P.~W. Shor.
\newblock Polynomial-time algorithms for prime factorization and discrete
  logarithms on a quantum computer.
\newblock {\em SIAM J. Comput.}, 26:1484--1509, 1997.
\newblock {\tt quant-ph/9508027}.

\bibitem{talagrand06}
M.~Talagrand.
\newblock The {P}arisi formula.
\newblock {\em Annals of Mathematics}, 163:221--263, 2006.

\end{thebibliography}


\appendix


\section{Example: Integer Linear Programming}
\label{app:ilp}

To gain some intuition for how the results presented here could be applied, in this appendix we describe one simple and well-known application of branch-and-bound techniques: integer linear programming. An integer linear program (ILP) is a problem of the form:
\begin{align*}
\text{minimise } & c^T x\\
\text{subject to } & Ax \ge b,\\
& x \ge 0,\\
& x \in \Z^n
\end{align*}
where $b$ and $c$ are vectors, $A$ is a matrix, and inequalities are interpreted componentwise. Integer linear programming problems have many applications, including production planning, scheduling, capital budgeting and depot location~\cite{chen10}.

We can solve ILPs using branch-and-bound as follows. We begin by finding a lower bound on the optimal solution to the ILP, by relaxing it to a standard linear program (LP) and solving the LP; that is, removing the constraint $x \in \Z^n$. This corresponds to the {\tt cost} function. If the solution $s$ is integer-valued, we are done, as it corresponds to a valid solution to the ILP. Otherwise, consider an index $i$ such that the found solution value $s_i$ is not an integer. To implement branching, we consider the two LPs formed by introducing the constraints $x_i \le \lfloor s_i \rfloor$, $x_i \ge \lceil s_i \rceil$. At least one of these must have the same optimal solution as the original ILP. We then repeat with these new LPs. An appealing aspect of this method is that the solution to the relaxation simultaneously tells us a lower bound on the cost, and a good variable to branch on.

The sequences of additional constraints specify subsets of potential solutions to the overall ILP. The {\tt branch} and {\tt cost} functions take this sequence as input and solve the resulting LP, to make a decision about which variable to branch on next, and compute a lower bound on cost, respectively. The complexity of the LP-solving step is polynomial in the input size, so the primary contribution to the overall runtime will in general be the exponential scaling in terms of the number of branching steps. A standard classical method could be used (e.g.\ the simplex algorithm), or one of the recently developed quantum algorithms for linear programming~\cite{brandao17,apeldoorn17,apeldoorn18,kerenidis18,brandao19}.

A particularly simple and elegant special case of this approach is the knapsack problem. Here we are given a list of $n$ items, each with weights $w_i$ and values $v_i$, and an overall weight upper bound $W$. We seek to find a subset $S$ of the items that maximises $\sum_{i \in S} v_i$, given that $\sum_{i \in S} w_i \le W$. We can write this as an integer linear program as follows:
\begin{align*}
\text{maximise } & \sum_{i=1}^n v_i x_i \\
\text{subject to } & \sum_{i=1}^n w_i x_i \le W,\\
& x_i \in \{0,1\} \text{ for all }i.
\end{align*}
Each variable $x_i$ corresponds to whether the $i$'th item is included in the knapsack. Then the LP relaxation is simply to replace the constraint $x_i \in \{0,1\}$ with the constraint $0 \le x_i \le 1$ for all $i$. This is equivalent to allowing fractional amounts of each item to be included.

The branch-and-bound approach to solving ILPs can immediately also be applied to the generalisation to Mixed Integer Linear Programming, where only certain variables are constrained. Now we only branch on those variables which are forced to be integers. One can also apply it to ``branch and cut'' algorithms. In this approach, when the LP relaxation returns a non-integer-valued solution, one may also add a new constraint (hyperplane) which separates that solution from all integer-valued feasible solutions.


\section{Analysis of Algorithm \ref{alg:qbb}}
\label{app:qbb}

In this appendix we prove the correctness and the claimed runtime bound of Algorithm \ref{alg:qbb}.

\begin{thm}
\label{thm:qbb}
Let $c_{\min}$ be the minimal cost of a valid solution, and let $T_{\min}$ be the size of the truncated tree with cost bound $c_{\min}$. (If there is no solution, $c_{\min} = \infty$, and $T_{\min}$ is the size of the whole tree.) Algorithm \ref{alg:qbb} uses
\begin{multline*}
O\bigg( \sqrt{T_{\min}d} \log c_{\max} \log\left(\frac{d \log c_{\max}}{\epsilon}\right)\\
\times \left( \log \left(\frac{d \log c_{\max}}{\epsilon}\right) + d \log d\right) \bigg)
 \end{multline*}
oracle calls, and except with failure probability at most $\epsilon$, returns a solution with minimal cost, if one exists, and otherwise ``no solution''.
\end{thm}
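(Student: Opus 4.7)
My plan is to split the argument into (i) a union bound that controls the probability of any internal Count or Search call returning an incorrect answer, and (ii) a deterministic analysis of correctness and running time conditioned on all such calls succeeding. For (i), the outer loop runs for at most $\log_2 T_{\max} = O(d)$ iterations, each invoking $O(\log c_{\max})$ Count calls plus a single Search call, and the final binary search on $c \in [c_{\text{old}}, c_{\text{new}}]$ contributes another $O(\log c_{\max})$ Search calls. Setting $\epsilon' = \epsilon/(K d \log_2 c_{\max})$ for a sufficiently large constant $K$ therefore drives the cumulative failure probability below $\epsilon$.

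The heart of the argument is a structural lemma, which I would prove by induction on the bit position $i$: at any iteration whose parameter $T$ satisfies $T \ge \tfrac{3}{2} T_{\min}$, the partial value $c_{\text{new}}^{(i)}$ produced after step $i$ of the binary search satisfies $c_{\text{new}}^{(i)} \ge c_{\min}^{(i)}$, where $c_{\min}^{(i)}$ denotes the integer formed by the top $i$ binary digits of $c_{\min}$. The only subtle subcase is when the $i$-th bit of $c_{\min}$ equals $1$ and $c_{\text{new}}^{(i-1)} = c_{\min}^{(i-1)}$: then Count is queried at value $v = c_{\min}^{(i)} \le c_{\min}$, the truncated tree at $v$ has at most $T_{\min} \le 2T/3$ nodes, and the $\delta = 1/2$ guarantee forces the Count call to accept, so the bit gets added to $c_{\text{new}}$. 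Applied at $i = \log_2 c_{\max}$ this yields $c_{\text{new}} \ge c_{\min}$, and the subsequent Search therefore locates a valid leaf in the truncated tree. Since the previous iteration's Search call (at cost $c_{\text{old}}$) returned ``not found'' only if $c_{\text{old}} < c_{\min}$, the inner binary search on $c \in [c_{\text{old}}, c_{\text{new}}]$ then pins down $c_{\min}$ exactly and the corresponding minimum-cost leaf. The ``no solution'' output is correct by an analogous argument applied to the no-solution instance, where $T_{\min} = |\mathcal{T}|$ and, once $T$ exceeds $\tfrac{3}{2}|\mathcal{T}|$, the structural lemma drives $c_{\text{new}}$ to its maximum so that the Search call covers the entire tree.

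For the running time, a single iteration $i$ with $T_i = 2^{i-1}$ costs $O(\log c_{\max} \cdot \sqrt{T_i d}\,\log^2(1/\epsilon'))$ queries in the Count calls and $O(\sqrt{T_i}\,d^{3/2} \log d\,\log(1/\epsilon'))$ in the Search. For the Search bound I use the complementary property (proved by inspecting the final accepted bit and invoking the contrapositive of the Count guarantee) that the tree truncated at the produced $c_{\text{new}}$ has size at most $3 T_i / 2$. In the solution-found case the outer loop terminates by iteration $i^{*} = O(\log T_{\min})$, so $\sum_{i \le i^{*}} \sqrt{T_i}$ collapses to $O(\sqrt{T_{\min}})$ by geometric summation; the final binary search on $c$ adds at most $\log c_{\max}$ extra Search calls on trees of size $O(T_{\min})$. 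Substituting $\log(1/\epsilon') = O(\log(d \log c_{\max}/\epsilon))$ and factoring $\sqrt{T_{\min} d}\,\log c_{\max} \log(d \log c_{\max}/\epsilon)$ out of both summands reproduces the expression stated in the theorem.

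The main obstacle is the ``intermediate'' regime in which the truncated tree size at some queried $c$ lies between $\tfrac{2T}{3}$ and $\tfrac{3T}{2}$, where Count is permitted to return either answer. The structural lemma is designed to thread this needle by relying only on the two deterministic sides of the Count guarantee: acceptance is forced when the tree is smaller than $2T/3$, which lower-bounds $c_{\text{new}}$; rejection is forced when the tree exceeds $3T/2$, which upper-bounds the size of the tree seen by Search. Once this observation is in hand, the rest of the proof is bookkeeping with union bounds, geometric sums, and the substitution $\log(1/\epsilon') = O(\log(d \log c_{\max}/\epsilon))$.
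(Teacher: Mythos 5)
Your proposal is correct and follows essentially the same route as the paper: a union bound with $\epsilon' = O(\epsilon/(d\log c_{\max}))$, the observation that Count is forced to accept whenever the truncated tree has at most $2T/3$ nodes (so $c_{\text{new}} \ge c_{\min}$ at the first iteration with $T \ge \tfrac{3}{2}T_{\min}$) and forced to reject above $3T/2$ (so $T_{c_{\text{new}}} \le 3T/2$), plus geometric summation over the doubling values of $T$. Your explicit bit-by-bit induction for the structural lemma is just a more detailed spelling-out of the paper's terser justification of the same claim, so no substantive difference in approach.
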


\begin{proof}
We first show that the algorithm succeeds with probability at least $1-\epsilon$. The loop executes at most $O(d)$ times, so each of Count and Search is used at most $O(d \log c_{\max})$ times. By a union bound, it is sufficient to pick $\epsilon' = O(\epsilon / (d \log c_{\max}))$ to ensure that all the uses of Count and Search succeed, except with total probability at most $\epsilon$. So we henceforth assume that Count and Search do always succeed.

If this is the case, we first observe that the algorithm always correctly outputs a minimal-cost solution, if one exists, or otherwise ``no solution''. This is because at the final iteration (when $T > T_{\max}/2$), if no solution has previously been found then Search will explore the entire tree and find a solution if one exists. To see that it outputs a {\em minimal-cost} solution, note that the binary search on $c$ using Search is over the range $[c_{\text{old}},c_{\text{new}}]$, and $c_{\text{old}}$ is no larger than the largest value of $c_{\text{new}}$ previously computed, so any solution with cost smaller than $c_{\text{old}}$ would have been found in a previous iteration.

It remains to prove the runtime bound. Let $T_c$ denote the size of the truncated tree with cost bound $c$ (so $T_{\min} = T_{c_{\min}}$). The first binary search (in part 2a) executes Count$_c$ $O(\log c_{\max})$ times, each iteration using $O(\sqrt{Td} \log^2(1/\epsilon'))$ queries; and the second binary search executes Search$_{c_{\text{new}}}$ $O(\log c_{\max})$ times, where each iteration uses $O(\sqrt{T_{c_{\text{new}}}} d^{3/2} \log d \log(1/\epsilon'))$ queries.
At each iteration of the loop, after the binary search using Count$_c$, $T_{c_{\text{new}}} \le 3T/2$ by correctness of the quantum tree size estimation algorithm.
Further, at the first iteration when $T \ge 3 T_{\min} / 2$ (if such an iteration occurs), for all $c \le c_{\min}$, Count$_c$ does not return ``contains more than $T$ nodes''. This implies that $c_{\text{new}} \ge c_{\min}$, because as the binary search terminated at cost $c_{\text{new}}$, Count$_{c_{\text{new}}+1}$ must have returned ``contains more than $T$ nodes''. Note that this holds even though Count$_c$ can return an arbitrary outcome when $T_c \in (2T/3, 3T/2]$.

Therefore, at this iteration the tree truncated at cost $c_{\text{new}}$ contains a minimal-cost solution, which will be found by the binary search on $c$ using Search$_c$, and the algorithm will terminate. On the other hand, if there is no iteration such that $T \ge 3 T_{\min} / 2$, we must have $T_{\min} > 2 T_{\max} / 3$. Combining these two claims, we have $T \le 3 T_{\min}$ throughout the algorithm.
The loop over exponentially increasing values of $T$ does not affect the overall complexity bound, so the overall complexity is
\begin{multline*}
O\bigg( \sqrt{T_{\min}d} \log c_{\max} \log\left(\frac{d \log c_{\max}}{\epsilon}\right)\\
\times \left( \log \left(\frac{d \log c_{\max}}{\epsilon}\right) + d \log d\right) \bigg)
\end{multline*}
queries, as claimed.
\end{proof}

We remark that it seems that, in general, Algorithm \ref{alg:qbb} could not be replaced with simply using the Search subroutine with exponentially increasing values of the cost parameter (an approach taken in~\cite{moylett17} for the special case of accelerating backtracking algorithms for the travelling salesman problem). This is because increasing the cost at which the tree is truncated by a constant factor could increase the size of the truncated tree substantially beyond $T_{\min}$.


\section{Truncated tree size bound for Sherrington-Kirkpatrick model}
\label{app:sktree}

In this appendix, let $T_A$ denote the size of the tree corresponding to the classical branch-and-bound algorithm applied to find the ground-state energy of an Ising Hamiltonian $H$ corresponding to an $n \times n$ matrix $A$, using the bounding function $\text{Bound}_A$ described in the main text, where the tree is truncated at the optimal value $c_{\min}$.

We will prove the following result:

\begin{thm}
\label{thm:skbound}
Let $A$ be an $n \times n$ matrix corresponding to a Sherrington-Kirkpatrick model instance on $n$ spins. For all sufficiently large $n$,
\[ \Pr_A[T_A \ge 2^{0.451n}] \le 0.01. \] 
\end{thm}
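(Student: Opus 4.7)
The plan is to control the expected size $\E_A[T_A]$ of the truncated tree and conclude by Markov's inequality: it suffices to show $\E_A[T_A]\le 0.01\cdot 2^{0.451n}$ for $n$ large. Write $T_A=\sum_{\ell=0}^{n}N_\ell$ where $N_\ell$ counts the surviving depth-$\ell$ nodes. The i.i.d.\ $N(0,1)$ law of $(a_{ij})$ is invariant under the sign flips $a_{ij}\mapsto s_is_ja_{ij}$ for any $s\in\{\pm1\}^n$, and these flips permute depth-$\ell$ partial assignments transitively. Hence every depth-$\ell$ node has the same survival probability, so
\[
\E[N_\ell]=2^\ell\,\Pr\bigl[\mathrm{Bound}_A(1,\dots,1)\le E_{\min}(H)\bigr].
\]

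To estimate this probability I decompose $\mathrm{Bound}_A(1,\dots,1)$ into three jointly independent pieces depending on disjoint blocks of $A$: the in-in sum $G_1=\sum_{i<j\le\ell}a_{ij}$, the signed boundary piece $G_2=-\sum_{j>\ell}|b_j|$ with $b_j=\sum_{i\le\ell}a_{ij}\sim N(0,\ell)$, and the reduced ground state $G_3=E_{\min}(H'_\ell)$ on the last $n-\ell$ spins. Using $\E[|b_j|]=\sqrt{2\ell/\pi}$ together with Parisi's formula $\E[E_{\min}(H_m)]=-(P_\infty+o(1))\,m^{3/2}$, $P_\infty\approx 0.7632$, both $\E[\mathrm{Bound}_A(1,\dots,1)]$ and $\E[E_{\min}(H)]$ are explicit to leading order, and with $\alpha:=\ell/n$ the expected gap works out to $n^{3/2}\bigl(f(\alpha)+o(1)\bigr)$ where
\[
f(\alpha)=P_\infty\bigl[1-(1-\alpha)^{3/2}\bigr]-(1-\alpha)\sqrt{2\alpha/\pi}.
\]
Numerically, $f<0$ on an initial interval $[0,\alpha^*]$ (where nodes survive for free and contribute at most $2^{\alpha^* n}$ to $\E[T_A]$) and $f>0$ beyond. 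For $f(\alpha)>0$ I combine Borell--TIS Gaussian concentration on the Lipschitz functionals $E_{\min}(H)$ and $E_{\min}(H'_\ell)$ (which replaces them by their means up to additive $O(n)=o(n^{3/2})$) with a Chernoff bound for the independent centred sum $G_1+G_2$ of variance $\binom{\ell}{2}+(n-\ell)\ell(1-2/\pi)=\Theta(n^2)$, yielding a tail estimate
\[
\Pr\bigl[\mathrm{Bound}_A(1,\dots,1)\le E_{\min}(H)\bigr]\le \exp\bigl(-c\,n\,f(\alpha)^2+o(n)\bigr)
\]
for some explicit $c>0$.

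Summing over $\ell$ gives $\E[T_A]\le n\cdot 2^{g^*n+o(n)}$ with $g^*=\max\bigl(\alpha^*,\,\sup_{\alpha\in(\alpha^*,1]}\bigl[\alpha-f(\alpha)^2/(c\ln 2)\bigr]\bigr)$, and Markov delivers the theorem as soon as $g^*<0.451$. The main obstacle is making the concentration constant $c$ large enough: a crude overall Lipschitz bound $\|\nabla B\|_2\le\sqrt{2}\,n$ fed into Borell--TIS only gives $c=\tfrac14$ and a saddle point $g^*\approx 0.79$, dominated by $\alpha$ close to $1$. Two refinements close the gap. First, the block independence of $G_1,G_2,G_3$ lets one use the much sharper individual variance proxies in the tail estimate for $\mathrm{Bound}_A$, rather than the worst-case gradient of $B$ as a whole. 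Second, $\alpha$ near $1$ must be treated separately, because there $\mathrm{Bound}_A(x)$ reduces to $H(x)$ itself, so $N_n$ simply counts the ground states of $H$, which is $O(1)$ in expectation by a standard second-moment argument for minima of Gaussian processes; hence the problematic layers contribute negligibly. With these ingredients the saddle-point $g^*$ falls strictly below $0.451$ and Theorem~\ref{thm:skbound} follows.
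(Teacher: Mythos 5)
There is a genuine gap, and it sits at the very first step: the reduction to showing $\E_A[T_A]\le 0.01\cdot 2^{0.451n}$. A depth-$\ell$ node survives iff $\mathrm{Bound}_A(x)\le E_{\min}(H)$, where the threshold $E_{\min}(H)$ is \emph{random} and correlated with $\mathrm{Bound}_A(x)$. Your plan decouples them by asserting that Gaussian concentration ``replaces $E_{\min}(H)$ by its mean up to additive $O(n)=o(n^{3/2})$'', but concentration of $E_{\min}(H)$ at scale $t$ fails with probability $\approx e^{-t^2/n^2}$: at scale $t=O(n)$ the failure probability is only a constant, and inside $\E[N_\ell]=2^\ell\Pr[\mathrm{Bound}_A(x)\le E_{\min}(H)]$ that constant gets multiplied by $2^\ell$, which is hopeless for $\alpha$ near $1$. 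If instead you take $t=\epsilon n^{3/2}$ so the failure probability $e^{-\epsilon^2 n}$ can beat $2^{\alpha n}$, you must then bound $\Pr[\mathrm{Bound}_A(x)\le \mu+\epsilon n^{3/2}]$ with a threshold shifted by $\Theta(n^{3/2})$; optimising $\epsilon$ per layer loses a factor of about $4$ in the exponent relative to the unshifted bound, and the resulting saddle point exceeds $0.451$ (e.g.\ at $\alpha=0.9$ one gets roughly $\alpha-(g_1(\alpha)+0.763)^2/(2\ln 2)\approx 0.58$). Indeed, the rare event that $E_{\min}(H)$ is atypically high plausibly makes $\E[T_A]$ itself much larger than $2^{0.451n}$, so the inequality you propose to prove may simply be false. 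This is exactly the difficulty the paper flags; its fix is to bound $\Pr[T_A\ge B]$ directly, inserting the indicator of the typical event $\{\min_z z^TAz\le \mu+\gamma n^{3/2}\}$ with $\gamma=O(1/\sqrt n)$ \emph{inside} the first-moment/Markov step, so each node is compared against a deterministic threshold shifted by only $o(n^{3/2})$, while the atypical event contributes a single additive $0.005$ to the final probability rather than a per-node term. Your proposed remedies do not repair this: the ``block independence of $G_1,G_2,G_3$'' does not help because $E_{\min}(H)$ depends on all blocks, and the paper's per-coordinate Lipschitz bound already gives the same variance proxy $\binom{n}{2}$ as Borell--TIS, so the loss you attribute to a crude gradient bound is really the decoupling loss above.

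A secondary issue is the regime $\alpha$ near $1$. Parisi's formula $\E[E_{\min}(H_m)]=-(0.763\dots+o(1))m^{3/2}$ is asymptotic in $m=n-\ell$, so it cannot be invoked uniformly when $n-\ell$ is small; the paper splits at $\ell=0.9n$ and uses a separate nonasymptotic bound (Lemma \ref{lem:expectation}, with the weaker constant $1.434$) for $\alpha\ge 0.9$, then verifies numerically that the resulting exponent $h_2(\alpha)$ stays below $0.45003$ there. Your separate treatment covers only the leaf layer $\ell=n$ (where counting exact minimisers suffices), but the layers with $n-\ell$ small and nonzero still need a quantitative estimate of the residual term, which your argument does not supply. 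The remaining ingredients of your proposal (the gauge-symmetry observation that all depth-$\ell$ nodes are equiprobable, the computation of the expected gap $f(\alpha)=g_1(\alpha)+0.763\dots$, and the final layer-wise optimisation) do match the paper's calculation once the two issues above are repaired in the paper's manner.
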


The dominant term in the quantum complexity is the square root of the classical complexity, which is determined by $T_A$, so Theorem \ref{thm:skbound} implies that the quantum branch-and-bound algorithm has an $O(2^{0.226n})$ running time on 99\% of Sherrington-Kirkpatrick model instances.

In order to prove Theorem \ref{thm:skbound}, we will need two technical lemmas, proven in Appendix \ref{app:lipschitz}.

\begin{lem}
\label{lem:lipschitz}
Let $x_1,\dots,x_N \sim N(0,1)$. Let $f:\R^N \rightarrow \R$ be continuous and 1-Lipschitz in each coordinate separately, i.e.\ $|f(x_1,\dots,x_N)-f(x_1,\dots,x_{i-1},x'_i,x_{i+1},\dots,x_N)| \le |x_i-x'_i|$ for all $i \in [N]$. Then
%
\beas \Pr[ f(x) \ge \E_x[f(x)] + t] \le e^{-t^2/(2N)},\\
\Pr[ f(x) \le \E_x[f(x)] - t] \le e^{-t^2/(2N)}. \eeas
\end{lem}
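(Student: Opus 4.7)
The plan is to reduce the statement to the standard Gaussian concentration inequality for functions that are globally Lipschitz in the Euclidean norm. First I would upgrade the coordinatewise Lipschitz hypothesis to such a global bound. By a telescoping argument that changes one coordinate at a time, for any $x,y \in \R^N$,
\[ |f(x)-f(y)| \le \sum_{i=1}^N |x_i - y_i|, \]
so $f$ is $1$-Lipschitz in the $\ell^1$ norm on $\R^N$. Cauchy--Schwarz then gives $\sum_i |x_i - y_i| \le \sqrt{N}\,\|x-y\|_2$, so $f$ is $\sqrt{N}$-Lipschitz with respect to the Euclidean norm.

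Next I would invoke the Borell--Tsirelson--Ibragimov--Sudakov inequality: if $g : \R^N \to \R$ is $L$-Lipschitz in the Euclidean norm and $x \sim N(0, I_N)$, then $\Pr[g(x) \ge \E[g(x)] + t] \le e^{-t^2/(2L^2)}$ for every $t \ge 0$. Applied to $f$ with $L = \sqrt{N}$, this immediately produces the upper-tail estimate $\Pr[f(x) \ge \E[f(x)] + t] \le e^{-t^2/(2N)}$; applying the same inequality to $-f$, which has the same Lipschitz constant, yields the matching lower tail.

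There is no real obstacle in the argument, which is essentially textbook. The one mild technicality is that Gaussian concentration is sometimes stated under a differentiability or smoothness assumption on $g$; this is handled either by invoking Rademacher's theorem (a Lipschitz function is differentiable almost everywhere, with gradient bounded by $L$) or by approximating $f$ with its convolution against a narrow Gaussian mollifier, applying the inequality to the smooth approximation, and passing to the limit using continuity of $f$ and dominated convergence.
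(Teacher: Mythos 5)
Your proof is correct, but it follows a genuinely different route from the paper. You upgrade the coordinatewise hypothesis to a global bound, $|f(x)-f(y)| \le \sum_i |x_i-y_i| \le \sqrt{N}\,\|x-y\|_2$, and then invoke Gaussian concentration (Borell--Tsirelson--Ibragimov--Sudakov) for $\sqrt{N}$-Lipschitz functions of a standard Gaussian vector, which gives exactly $e^{-t^2/(2N)}$ for each tail; the constants check out, and the smoothness caveat you raise is not even needed, since the inequality holds for all Lipschitz functions. The paper instead stays with elementary tools: it discretises each Gaussian as a normalised sum of $M$ Rademacher variables, applies McDiarmid's bounded-differences inequality to $g(y)=f(x(y))$ with differences $2/\sqrt{M}$ (so $d = 4N$, again yielding $e^{-t^2/(2N)}$), and lets $M \to \infty$ via the central limit theorem -- this is precisely where the continuity hypothesis on $f$ is used, to pass the tail bound to the Gaussian limit. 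Your argument is shorter and sidesteps that limiting step (which the paper treats somewhat informally), at the price of invoking a heavier off-the-shelf theorem; the paper's argument is self-contained at the level of bounded-differences concentration, which is presumably why it was chosen. Note also that in your approach continuity is automatic from the telescoped $\ell^1$-Lipschitz bound, so the hypothesis as stated is not restrictive either way.
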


Lemma \ref{lem:lipschitz} was shown in~\cite{lalley13} but with an incorrect constant. We will also need a bound on the expectation $ \E_A[\min_{z \in \{\pm1\}^n} z^T A z]$. A precise value for this is known as $n \rightarrow \infty$, but we will need a bound that holds for arbitrary $n$:

\begin{lem}
\label{lem:expectation}
Let $A$ be an $n \times n$ matrix corresponding to a Sherrington-Kirkpatrick model instance on $n$ spins. For all $n$, $\E_A[\min_{z \in \{\pm1\}^n} z^T A z] \ge - 0.601\sqrt{n} - 0.833 n^{3/2}$.
\end{lem}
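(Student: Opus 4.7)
The plan is to reduce the expectation bound on the minimum to an expectation bound on the maximum, and then apply the classical annealed exponential moment method. Because the entries $a_{ij}$ are i.i.d.\ $N(0,1)$ and in particular symmetric, $-A$ is equal in distribution to $A$, so $\min_z H_A(z) = -\max_z H_{-A}(z)$ is equal in distribution to $-\max_z H_A(z)$. Consequently $\E_A[\min_z H_A(z)] = -\E_A[\max_z H_A(z)]$, and it suffices to upper bound the latter by $0.833 n^{3/2} + 0.601\sqrt{n}$.

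To bound $\E_A[\max_z H_A(z)]$ I would use the standard trick of replacing a maximum by a soft-max. For any $\beta > 0$,
\[ \max_{z \in \{\pm1\}^n} H_A(z) \le \frac{1}{\beta} \log \sum_{z \in \{\pm 1\}^n} e^{\beta H_A(z)}. \]
Taking expectations and using Jensen's inequality in the form $\E[\log X] \le \log \E[X]$,
\[ \E_A[\max_z H_A(z)] \le \frac{1}{\beta} \log \E_A\Big[ \sum_z e^{\beta H_A(z)} \Big]. \]
Since $a_{ij} z_i z_j \sim N(0,1)$ for each fixed $z$ and the $a_{ij}$ are independent across the $\binom{n}{2}$ edges, the Gaussian moment generating function gives $\E_A[e^{\beta H_A(z)}] = e^{\beta^2 n(n-1)/4}$, independent of $z$. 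Summing over the $2^n$ values of $z$ yields
\[ \E_A[\max_z H_A(z)] \le \frac{n \log 2}{\beta} + \frac{\beta n(n-1)}{4}. \]

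Optimising in $\beta$ with $\beta^* = 2\sqrt{\log 2 /(n-1)}$ gives the clean bound $\E_A[\max_z H_A(z)] \le n\sqrt{(n-1)\log 2} \le \sqrt{\log 2}\, n^{3/2}$. Numerically $\sqrt{\log 2} = 0.83255\ldots < 0.833$, so $n\sqrt{(n-1)\log 2} \le 0.833\, n^{3/2} \le 0.833\, n^{3/2} + 0.601 \sqrt{n}$ for every $n \ge 1$, and hence $\E_A[\min_z H_A(z)] \ge -0.833\, n^{3/2} - 0.601 \sqrt{n}$ as required.

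There is no real obstacle: the argument is a textbook first-moment / annealed computation for a centred Gaussian chaos, and Lemma \ref{lem:lipschitz} is not needed here (it handles concentration around the mean, which is the separate ingredient used later in Theorem \ref{thm:skbound}). The only thing that needs care is the numerical bookkeeping to confirm that the generous constants $0.833$ and $0.601$ absorb the optimal $\sqrt{\log 2}$ constant uniformly in $n$, which I have just verified holds for all $n \ge 1$ with ample slack.
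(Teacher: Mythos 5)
Your proof is correct, but it takes a genuinely different route from the paper. The paper bounds $\E_A[\min_z z^T A z]$ by tail integration: it writes the expectation as $-\int_{-\infty}^0 \Pr[\min_z z^T A z \le t]\,dt$, applies a union bound over the $2^n$ configurations together with a Gaussian tail estimate $\Pr[\sum_{i<j} a_{ij} \le t] \le e^{-t^2/n^2}$, and then splits the integral at $-n^{3/2}\sqrt{\ln 2}$, which is where the explicit correction term $-\tfrac{\sqrt{n}}{2\sqrt{\ln 2}} \approx -0.601\sqrt{n}$ comes from. Your annealed soft-max computation is the Laplace-transform version of the same first-moment idea: after the symmetry reduction $\E[\min] = -\E[\max]$ (or, equivalently, by bounding the min directly with $\min_z H \ge -\beta^{-1}\log\sum_z e^{-\beta H(z)}$, which avoids even that step), Jensen plus the Gaussian moment generating function gives $\E[\max] \le \tfrac{n\log 2}{\beta} + \tfrac{\beta n(n-1)}{4}$, and optimising $\beta$ yields $n\sqrt{(n-1)\ln 2} \le \sqrt{\ln 2}\,n^{3/2}$. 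Since $\sqrt{\ln 2} = 0.8325\ldots$, both methods produce the same leading constant, but your bound is actually slightly stronger and cleaner: it has no lower-order $\sqrt{n}$ loss at all, so the stated inequality follows with room to spare (the case $n=1$, where your optimal $\beta$ is degenerate, is trivial since the Hamiltonian is identically zero). You are also right that Lemma \ref{lem:lipschitz} plays no role here; it is only used for the concentration step in Theorem \ref{thm:skbound}.
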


We are now able to prove Theorem \ref{thm:skbound}. The basic strategy is to upper-bound the expected value of $T_A$, using that (by linearity of expectation) this can be expressed as a sum over all bit-strings $x \in \{\pm1\}^\ell$, $0 \le \ell \le n$, of the probability that the node corresponding to $x$ is contained within the truncated branch-and-bound tree. These bit-strings $x$ are precisely those such that Bound$_A(x) \le \min_{z \in \{\pm1\}^n} z^T A z$, and a tail bound can be used to upper-bound the probability that this event occurs.

There are two technical difficulties which need to be handled. First, this approach does not give a good upper bound in the case where $\min_{z \in \{\pm1\}^n} z^T A z$ is high, which can occur with non-negligible probability, leading to $\E_A[T_A]$ becoming large. We therefore handle this case separately and show that it occurs with low probability. Next, to find a tail bound on Bound$_A(x)$, we need to compute expressions of the form $\E_A[\min_{z \in \{\pm1\}^{n-\ell}} z^T A z]$; although a limiting form for this is known~\cite{parisi80,talagrand06,schmidt08}, we will additionally need relatively tight bounds in the case $\ell \approx n$. We therefore split into cases $\ell \le 0.9n$ (where $(n-\ell) \rightarrow \infty$ and we can use the precise limiting result) and $\ell > 0.9n$ (where we use Lemma \ref{lem:expectation}).

\begin{widetext}
\begin{proof}[Proof of Theorem \ref{thm:skbound}]
Write $\mu = \E_A[\min_{z \in \{\pm1\}^n} z^T A z]$, and let $\gamma > 0$ be an arbitrary value to be determined. We will upper-bound the probability that $T_A \ge B$ for some $B$ as follows, where we use the notation $[Y]$ for the indicator random variable which evaluates to 1 if $Y$ is true, and 0 if $Y$ is false:
\beas
\Pr_A[T_A \ge B] &=& \Pr_A[T_A \ge B \wedge \min_z z^T A z \le \mu + \gamma n^{3/2}] + \Pr_A[T_A \ge B \wedge \min_z z^T A z > \mu + \gamma n^{3/2}] \\
&\le& \Pr_A\left[T_A[\min_z z^T A z \le \mu + \gamma n^{3/2}] \ge B \right] + \Pr_A[\min_z z^T A z > \mu + \gamma n^{3/2}]\\
&\le& \frac{1}{B} \E_A\left[ T_A [\min_z z^T A z \le \mu + \gamma n^{3/2}] \right] + \Pr_A[\min_z z^T A z > \mu + \gamma n^{3/2}]\\
&=& \frac{1}{B} \sum_{\ell=0}^n \sum_{x \in \{\pm1\}^\ell} \E_A\left[ [\text{Bound}_A(x) \le  \min_z z^T A z] [\min_z z^T A z \le \mu + \gamma n^{3/2}] \right] + \Pr_A[\min_z z^T A z > \mu + \gamma n^{3/2}]\\
&\le& \frac{1}{B} \sum_{\ell=0}^n \sum_{x \in \{\pm1\}^\ell} \Pr_A\left[ \text{Bound}_A(x) \le \mu + \gamma n^{3/2} \right] + \Pr_A[\min_z z^T A z > \mu + \gamma n^{3/2}]\\
&\le& \frac{n+1}{B} \max_{\ell, x \in \{\pm1\}^\ell} 2^\ell \Pr_A\left[ \text{Bound}_A(x) \le \mu + \gamma n^{3/2} \right] + \Pr_A[\min_z z^T A z > \mu + \gamma n^{3/2}]\\
\eeas
where the second inequality is Markov's inequality and we use linearity of expectation in the second equality.

To upper-bound the last term, we use Lemma \ref{lem:lipschitz}. We first observe that $f(A) = \min_z z^T A z$ is 1-Lipschitz in each variable, as if we modify $A$ to produce $A'$ by changing $a_{pq}$ to $a'_{pq}$ for some pair $p<q$,
\be \label{eq:1lipschitz} \min_{z \in \{\pm1\}^n} \sum_{i < j} a'_{ij} z_i z_j =  \min_{z \in \{\pm1\}^n} \left((a'_{pq} - a_{pq})z_p z_q + \sum_{i < j} a_{ij} z_i z_j \right) \ge -|a'_{pq} - a_{pq}| + \min_{z \in \{\pm1\}^n} \sum_{i < j} a_{ij} z_i z_j, \ee
and by a similar argument $\min_{z \in \{\pm1\}^n} \sum_{i < j} a'_{ij} z_i z_j \le |a'_{pq} - a_{pq}| + \min_{z \in \{\pm1\}^n} \sum_{i < j} a_{ij} z_i z_j$. So Lemma \ref{lem:lipschitz} implies that
\[ \Pr_A[\min_z z^T A z > \mu + \gamma n^{3/2}] \le e^{-(\gamma n^{3/2})^2/(2 \binom{n}{2})} \le e^{-\gamma^2 n}. \]
For this to be upper-bounded by a small constant (e.g.\ 0.005) we can take $\gamma = O(1/\sqrt{n})$.

We next upper-bound the first term by bounding $\Pr_A\left[ \text{Bound}_A(x) \le \mu + \gamma n^{3/2} \right]$. We only need to consider $\ell \ge 0.4n$ in the maximisation, because when $\ell \le 0.4n$, trivially upper-bounding this probability by 1 already gives a sufficiently strong bound. Recall that
\[ \text{Bound}_A(x) = \sum_{1 \le i < j \le \ell} a_{ij} x_i x_j - \sum_{j =\ell+1}^n \left| \sum_{i=1}^{\ell} a_{ij} x_i \right| +  \min_z \sum_{\ell+1 \le i < j \le n} a_{ij} z_i z_j. \]
The function $f_x(A) = \text{Bound}_A(x)$ is 1-Lipschitz in each variable by a similar argument to (\ref{eq:1lipschitz}). Thus, for any $\eta \ge 0$,
\be \label{eq:prbound} \Pr_A\left[\text{Bound}_A(x) \le \E_A[\text{Bound}_A(x)] - \eta n^{3/2} \right] \le e^{-\eta^2 n}. \ee
First assume that $0.4n \le \ell \le 0.9n$, so $(n-\ell) \rightarrow \infty$ as $n\rightarrow \infty$. For any $x \in \{\pm1\}^\ell$, we have
\bea
\E_A[\text{Bound}_A(x)] &=& \sum_{1 \le i < j \le \ell} \E_A[a_{ij}] x_i x_j - \sum_{j =\ell+1}^n \E_A\left[\left| \sum_{i=1}^{\ell} a_{ij} x_i \right|\right] + \E_A\left[ \min_z \sum_{\ell+1 \le i < j \le n} a_{ij} z_i z_j \right] \\
&=& - (n-\ell) \E_A\left[\left| \sum_{i=1}^{\ell} a_{ij} \right|\right] + \E_A\left[ \min_z \sum_{\ell+1 \le i < j \le n} a_{ij} z_i z_j \right] \\
\label{eq:g1alpha} &=& - (n-\ell) \sqrt{\frac{2}{\pi}} \sqrt{\ell} - (0.763\dots + o(1))(n-\ell)^{3/2},
\eea
where we use linearity of expectation to obtain the first expression, that $\sum_{i=1}^{\ell} a_{ij} \sim \sqrt{\ell} N(0,1)$, and the known limiting result $\E_A\left[ \min_z \sum_{\ell+1 \le i < j \le n} a_{ij} z_i z_j \right] = (-0.763167\dots + o(1))(n-\ell)^{3/2}$~\cite{parisi80,talagrand06,schmidt08}.

Writing $\alpha = \ell/n$, we have that
\[ \E_A[\text{Bound}_A(x)] = (- (1-\alpha)\sqrt{\alpha} \sqrt{\frac{2}{\pi}} - (0.763\dots + o(1))(1-\alpha)^{3/2})n^{3/2} =: g_1(\alpha) n^{3/2}. \]
On the other hand, for $\ell \ge 0.9n$, we follow a similar argument but apply the nonasymptotic result of Lemma \ref{lem:expectation} to bound $\E_A\left[ \min_z \sum_{\ell+1 \le i < j \le n} a_{ij} z_i z_j \right]$, which implies that
\bea
\E_A[\text{Bound}_A(x)] &\ge& - (n-\ell) \sqrt{\frac{2}{\pi}} \sqrt{\ell} - 0.601\sqrt{n-\ell} - 0.833 (n-\ell)^{3/2}\\
&\ge& - (n-\ell) \sqrt{\frac{2}{\pi}} \sqrt{\ell} - 1.434 (n-\ell)^{3/2}\\
\label{eq:g2alpha} &=&  (- (1-\alpha)\sqrt{\alpha} \sqrt{\frac{2}{\pi}} - 1.434 (1-\alpha)^{3/2})n^{3/2} =: g_2(\alpha) n^{3/2}.
\eea
In either case, we have
\[ \Pr_A\left[\text{Bound}_A(x)  \le \mu + \gamma n^{3/2}\right] = \Pr_A\left[\text{Bound}_A(x) - \E_A[\text{Bound}_A(x)] \le \mu + \gamma n^{3/2} - \E_A[\text{Bound}_A(x)] \right]. \]
By (\ref{eq:prbound}), using $\mu = (-0.763167\dots + o(1))n^{3/2}$ and observing (see Figure \ref{fig:galpha}) that $\E_A[\text{Bound}_A(x)] > \mu + \gamma n^{3/2}$ for sufficiently large $n$, so the right-hand side is negative as required, we have
\[ \Pr_A\left[\text{Bound}_A(x)  \le \mu + \gamma n^{3/2}\right] \le \begin{cases} e^{-(g_1(\alpha) + (0.763\dots + o(1)) - \gamma)^2 n} & \text{if } 0.4 \le \alpha \le 0.9 \\ e^{-(g_2(\alpha) + (0.763\dots + o(1)) - \gamma)^2 n} & \text{if } \alpha \ge 0.9 \end{cases}. \]
So 
\beas
&& \!\!\!\!\!\!\!\!\!\!\!\!\!\!\!\!\!\!\!\! \!\!\!\!  \max_{\ell \ge 0.4n, x \in \{\pm1\}^\ell} 2^\ell \Pr_A\left[ \text{Bound}_A(x) \le \mu + \gamma n^{3/2} \right]\\
 &\le& \max \{ \max_{\alpha \in [0.4,0.9]} 2^ {\alpha n} e^{-(g_1(\alpha) + 0.763\dots + o(1) )^2 n}, \max_{\alpha \in [0.9,1]} 2^ {\alpha n} e^{-(g_2(\alpha) + 0.763\dots + o(1) )^2 n} \} \\
&=& \max \{ \max_{\alpha \in [0.4,0.9]} 2^{n(\alpha - (g_1(\alpha) + 0.763\dots + o(1))^2/\ln 2)}, \max_{\alpha \in [0.9,1]} 2^{n(\alpha - (g_2(\alpha) + 0.763\dots + o(1))^2/\ln 2)} \}
\eeas
observing that $\gamma = o(1)$.
It remains to determine upper bounds on the functions
\be \label{eq:halpha} \alpha - \frac{(g_1(\alpha) + 0.763\dots + o(1))^2}{\ln 2} = \alpha - \frac{(- (1-\alpha)\sqrt{\alpha} \sqrt{\frac{2}{\pi}} + 0.763\dots(1-(1-\alpha)^{3/2}))^2}{\ln 2} + o(1) =: h_1(\alpha) + o(1), \ee
\be \label{eq:halpha2} \alpha - \frac{(g_2(\alpha) + 0.763\dots + o(1))^2}{\ln 2} = \alpha - \frac{(- (1-\alpha)\sqrt{\alpha} \sqrt{\frac{2}{\pi}} + 0.763\dots - 1.434(1-\alpha)^{3/2})^2}{\ln 2} + o(1) =: h_2(\alpha) + o(1). \ee
This can easily be achieved numerically, giving (see Figure \ref{fig:halpha}) the result that $h_1(\alpha) < 0.45003$ for $0 \le \alpha \le 1$ and $h_2(\alpha) < 0.45003$ for $\alpha \ge 0.9$. Hence
\[ \Pr_A[T_A \ge B] \le \frac{(n+1) 2^{(0.45003 + o(1))n}}{B} + 0.005, \]
and to upper-bound the first term by 0.005, for sufficiently large $n$ one can take $B = 2^{0.451n}$. This completes the proof.
\end{proof}
\end{widetext}

\begin{figure}
\includegraphics[width=0.5\textwidth]{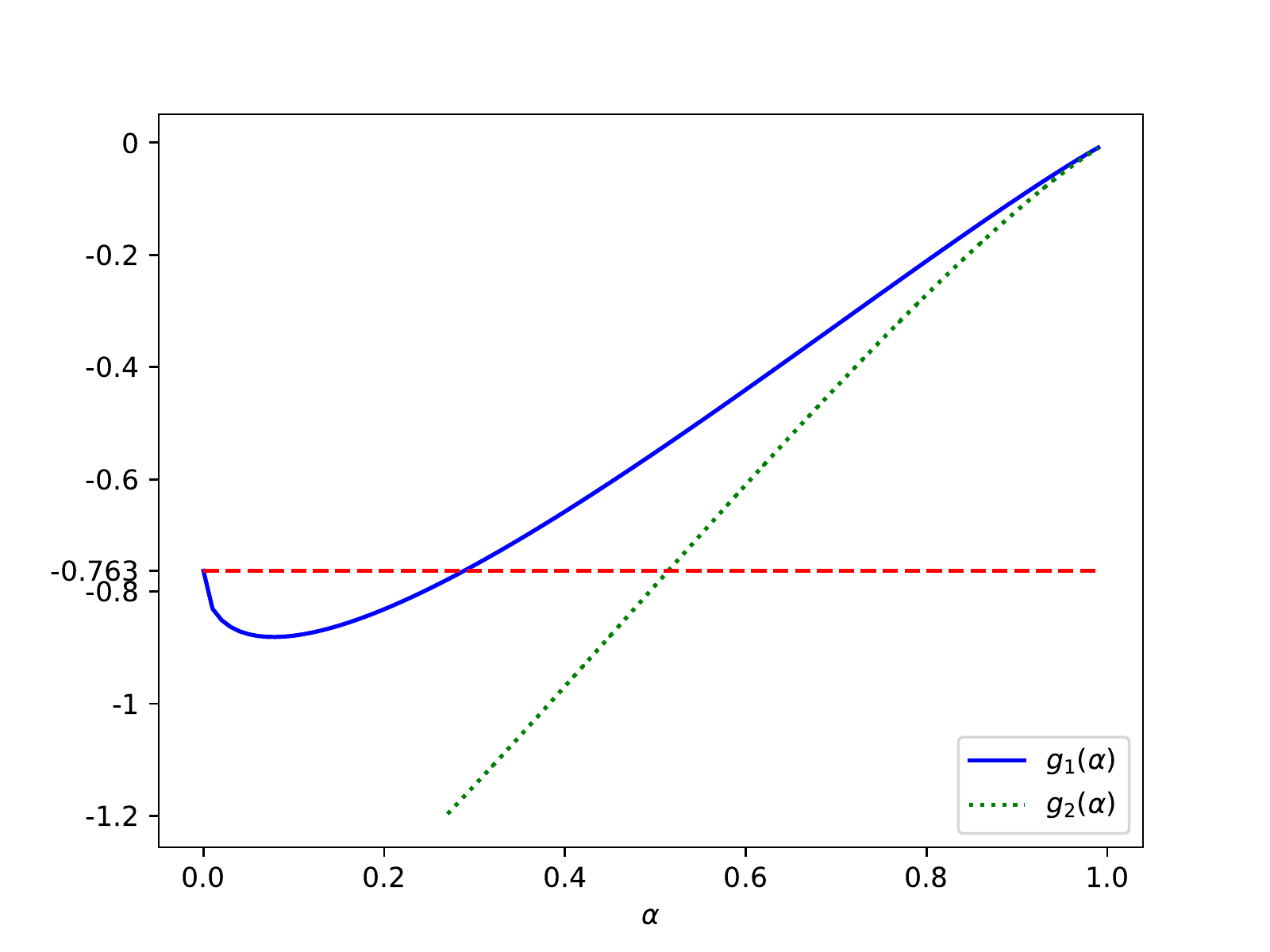}
\caption{The functions $g_1(\alpha)$, $g_2(\alpha)$ defined in (\ref{eq:g1alpha}), (\ref{eq:g2alpha}). $g_1(\alpha) \ge -0.763$ for all $\alpha \ge 0.4$, while $g_2(\alpha) \ge -0.763$ for all $\alpha \ge 0.9$.}
\label{fig:galpha}
\end{figure}

\begin{figure}
\includegraphics[width=0.5\textwidth]{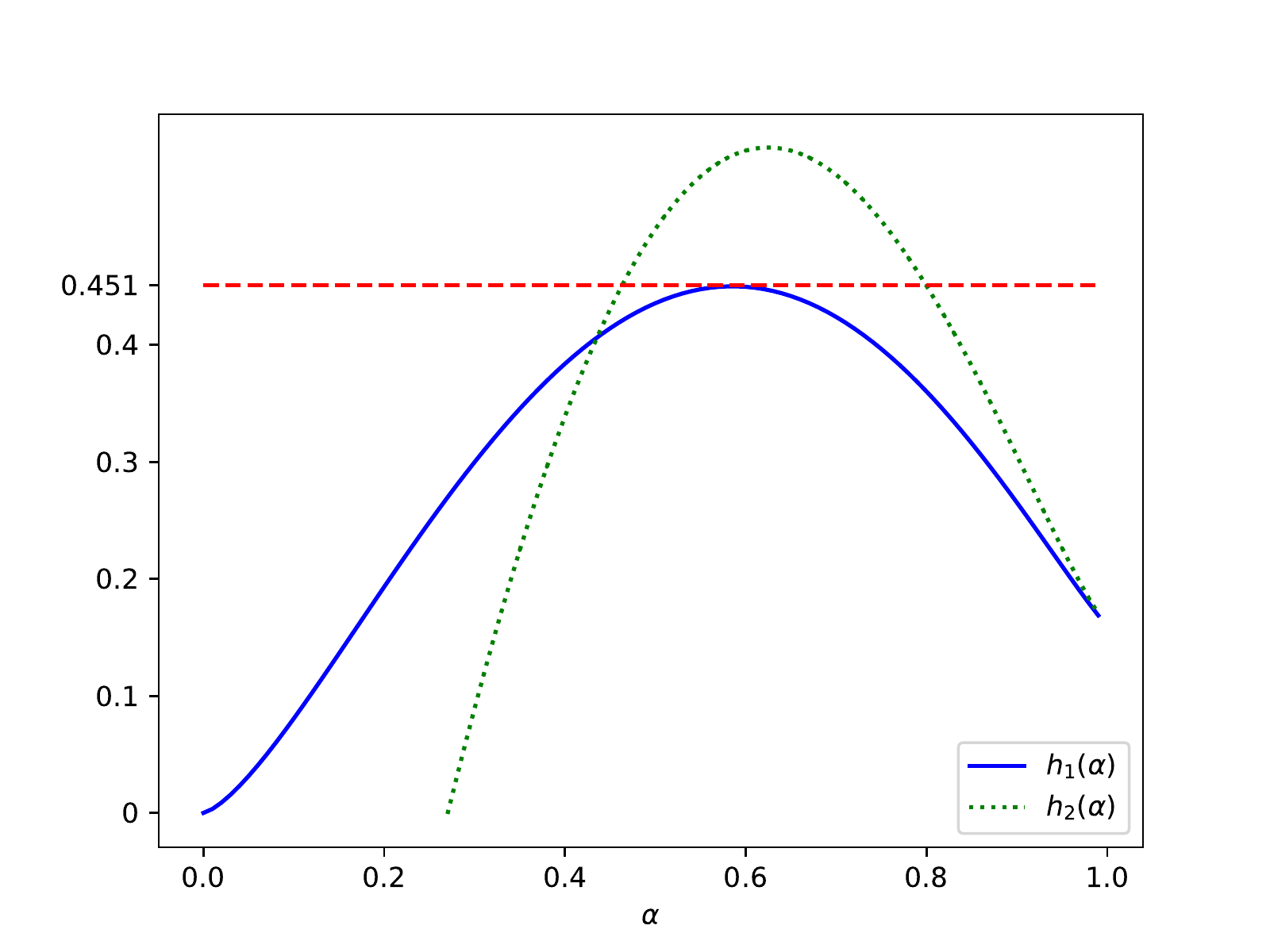}
\caption{The functions $h_1(\alpha)$, $h_2(\alpha)$ defined in (\ref{eq:halpha}), (\ref{eq:halpha2}). $h_1(\alpha) < 0.45003$ for all $\alpha$, while $h_2(\alpha) < 0.45003$ for all $\alpha \ge 0.9$.}
\label{fig:halpha}
\end{figure}


\section{Proofs of technical lemmas}
\label{app:lipschitz}

In this appendix we prove Lemmas \ref{lem:lipschitz} and \ref{lem:expectation}. We say that $f(x_1,\dots,x_N)$ satisfies the bounded differences condition with constants $d_i$, $i \in [N]$, if $|f(x) - f(x')| \le d_i$ whenever $x$ and $x'$ differ only in the $i$'th coordinate.

\begin{lem}[McDiarmid's inequality or method of bounded differences~{\cite[Corollary 5.2]{dubhashi09}}]
\label{lem:bdmethod}
If $f(x_1,\dots,x_N)$ satisfies the bounded differences condition with constants $d_i$, and $x_1,\dots,x_N$ are independent random variables, then
\beas
\Pr[ f(x) \ge \E_x[f(x)] + t] \le e^{-2t^2/d},\\
\Pr[ f(x) \le \E_x[f(x)] - t] \le e^{-2t^2/d},
\eeas
where $d = \sum_{i=1}^N d_i^2$.
\end{lem}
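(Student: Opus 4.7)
The plan is to prove this via the Doob martingale construction combined with Hoeffding's lemma, which is the standard route for concentration of functions satisfying bounded differences. First I would set up the filtration $\mathcal{F}_i = \sigma(x_1,\dots,x_i)$ together with the martingale $M_i = \E[f(x_1,\dots,x_N) \mid \mathcal{F}_i]$, so that $M_0 = \E[f]$ and $M_N = f(x)$, and study its martingale differences $Y_i = M_i - M_{i-1}$. Summing telescopically gives $f - \E f = \sum_{i=1}^N Y_i$, which reduces the task to controlling the distribution of this sum.

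The key technical step is to control the conditional range of $Y_i$ given $\mathcal{F}_{i-1}$. Define $g_i(a) = \E[f(x_1,\dots,x_{i-1},a,x_{i+1},\dots,x_N) \mid \mathcal{F}_{i-1}]$. Using independence of the $x_j$ and a coupling of the samples of $x_{i+1},\dots,x_N$ on both sides, the pointwise hypothesis $|f(x) - f(x')| \le d_i$ (when $x,x'$ differ only in the $i$-th coordinate) transfers to $|g_i(a) - g_i(b)| \le d_i$ for all $a,b$. Hence $Y_i = g_i(x_i) - \E[g_i(x_i) \mid \mathcal{F}_{i-1}]$ is a centred random variable supported, conditionally on $\mathcal{F}_{i-1}$, in an interval of length at most $d_i$.

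With the conditional range in hand, Hoeffding's lemma gives the conditional MGF bound $\E[e^{\lambda Y_i} \mid \mathcal{F}_{i-1}] \le e^{\lambda^2 d_i^2 / 8}$. Iterating by the tower property yields $\E[e^{\lambda(f - \E f)}] \le e^{\lambda^2 d/8}$ with $d = \sum_i d_i^2$. A Chernoff argument $\Pr[f - \E f \ge t] \le e^{-\lambda t} \E[e^{\lambda(f - \E f)}]$, followed by optimisation (taking $\lambda = 4t/d$), produces the upper tail bound $\Pr[f - \E f \ge t] \le e^{-2t^2/d}$. The lower tail follows from the upper tail applied to $-f$, which inherits the bounded differences property with the same constants $d_i$.

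The main obstacle will be making the conditional range bound on $Y_i$ rigorous: one must argue that the pointwise bounded differences property survives passage to conditional expectation over $x_{i+1},\dots,x_N$. This is standard but depends essentially on the independence of the $x_j$, and would typically be phrased either via an explicit coupling or by bounding $\sup_a g_i(a) - \inf_b g_i(b) \le d_i$ coordinate-wise inside the expectation. Once this step is handled, everything else is routine: Hoeffding's lemma, the tower property, and the Chernoff optimisation are mechanical.
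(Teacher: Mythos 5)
Your proposal is correct: the Doob martingale $M_i = \E[f \mid x_1,\dots,x_i]$, the transfer of the bounded-differences property to the conditional range of each increment via independence, Hoeffding's lemma, the tower property, and the Chernoff optimisation at $\lambda = 4t/d$ do yield exactly the stated bounds $e^{-2t^2/d}$, with the lower tail obtained by applying the result to $-f$. The paper does not prove this lemma itself but imports it by citation (Corollary 5.2 of the reference of Dubhashi and Panconesi), and your argument is essentially the standard proof underlying that cited result, so there is nothing further to reconcile.
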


\begin{replem}{lem:lipschitz}
Let $x_1,\dots,x_N \sim N(0,1)$. Let $f:\R^N \rightarrow \R$ be continuous and 1-Lipschitz in each coordinate separately, i.e.\ $|f(x_1,\dots,x_N)-f(x_1,\dots,x_{i-1},x'_i,x_{i+1},\dots,x_N)| \le |x_i-x'_i|$ for all $i \in [N]$. Then
%
\beas \Pr[ f(x) \ge \E_x[f(x)] + t] \le e^{-t^2/(2N)},\\
\Pr[ f(x) \le \E_x[f(x)] - t] \le e^{-t^2/(2N)}.
\eeas
\end{replem}

\begin{proof}
For $i \in [N]$, $j \in [M]$, let $y_i^j$ be a Rademacher random variable, taking values $\pm 1$ with equal probability. Then define the sequence $x(y)$ by $x(y)_i = \frac{1}{\sqrt{M}} \sum_{j=1}^M y_i^j$. Let $g:\{\pm1\}^{MN} \rightarrow \R$ be defined by setting $g(y) = f(x(y))$. Then changing one entry of $y$ can change $x(y)_i$ by at most $2/\sqrt{M}$, so we can apply Lemma \ref{lem:bdmethod} with $d_i = 2/\sqrt{M}$ to obtain
\beas
\Pr[ f(x(y)) \ge \E_y[f(x(y))] + t] \le e^{-t^2 / (2N)},\\
\Pr[ f(x(y)) \le \E_y[f(x(y))] - t] \le e^{-t^2/ (2N)}.
\eeas
As $M \rightarrow \infty$, the distribution of $x(y)_i$ approaches a standard normal distribution for all $i$. The lemma follows.
\end{proof}

\begin{replem}{lem:expectation}
Let $A$ be an $n \times n$ matrix corresponding to a Sherrington-Kirkpatrick model instance on $n$ spins. For all $n$, $\E_A[\min_{z \in \{\pm1\}^n} z^T A z] \ge - 0.601\sqrt{n} - 0.833 n^{3/2}$.
\end{replem}

\begin{proof}[Proof of Lemma \ref{lem:expectation}]
We start with
\[ \E_A\left[\min_{z \in \{\pm1\}^n} z^T A z\right] = -\int_{-\infty}^0 \Pr\left[\min_{z \in \{\pm1\}^n} z^T A z \le t\right] dt, \]
valid as $\min_{z \in \{\pm1\}^n} z^T A z$ is non-positive. Next, for any $t \le 0$ we have
\[ \Pr\left[\min_{z \in \{\pm1\}^n} z^T A z \le t\right] \le 2^n \Pr\left[ \sum_{i < j} a_{ij} \le t \right] \]
using a union bound over $z$ and symmetry of the distribution of $a_{ij}$. By a tail bound on the normal distribution, we have
\[ \Pr\left[ \sum_{i < j} a_{ij} \le t \right] \le e^{-t^2/(2 \binom{n}{2})} \le e^{-t^2/n^2} \]
for all $t \le 0$. So
\beas
&& \E_A\left[\min_{z \in \{\pm1\}^n} z^T A z\right]\\
&\ge& -\int_{-\infty}^0 \min\{1, 2^n e^{-t^2/n^2} \} dt\\
&=& -\int_{-\infty}^{-n^{3/2}\sqrt{\ln 2} } 2^n e^{-t^2/n^2} dt - \int_{-n^{3/2}\sqrt{\ln 2} }^0 1 dt\\
&=& -n 2^{n-1/2} \int_{-\infty}^{-\sqrt{2n \ln 2} } e^{-t^2/2} dt - \sqrt{\ln 2} n^{3/2}\\
&\ge& - \frac{\sqrt{n}}{2 \sqrt{\ln 2}}  - \sqrt{\ln 2} n^{3/2}\\
&=& - 0.600561\dots \sqrt{n} - 0.832555\dots n^{3/2},
\eeas
where we use the bound $\int_a^\infty e^{-x^2/2} dx \le \frac{1}{a} e^{-a^2/2}$ for any $a>0$ in the second inequality.
\end{proof}


\section{Classical numerical branch-and-bound results}
\label{app:numerics}

We implemented the classical branch-and-bound algorithm described in the main text, with cost function Bound$_A$, using a simple depth-first search procedure within the branch-and-bound tree which backtracks on nodes corresponding to partial solutions with an energy bound worse than the lowest energy seen thus far. For an S-K model instance described by a matrix $A$, this gives an upper bound on the size $T_A$ of an optimally truncated branch-and-bound tree (equivalently, on the runtime of the best-first search algorithm applied to find the ground state energy, with cost function Bound$_A$).

This algorithm enabled instances on more than 50 spins to be solved within minutes on a standard laptop computer. We then carried out a least-squares fit on the log of the number of nodes explored, omitting small $n$, to estimate the scaling of the algorithm with $n$. Note that, due to finite-size effects, this may not be accurate for large $n$; however, it gives an indication of tree size scaling. The median normalised ground state energy found for the larger values of $n$ (e.g.\ $\approx -0.71$ for $n=50$) seems to approach the limiting value $-0.763167\dots$ relatively slowly. These results are consistent with heuristic finite-size results reported in~\cite{boettcher05} and were validated using exhaustive search for small $n$.

\begin{figure}
\includegraphics[width=0.5\textwidth]{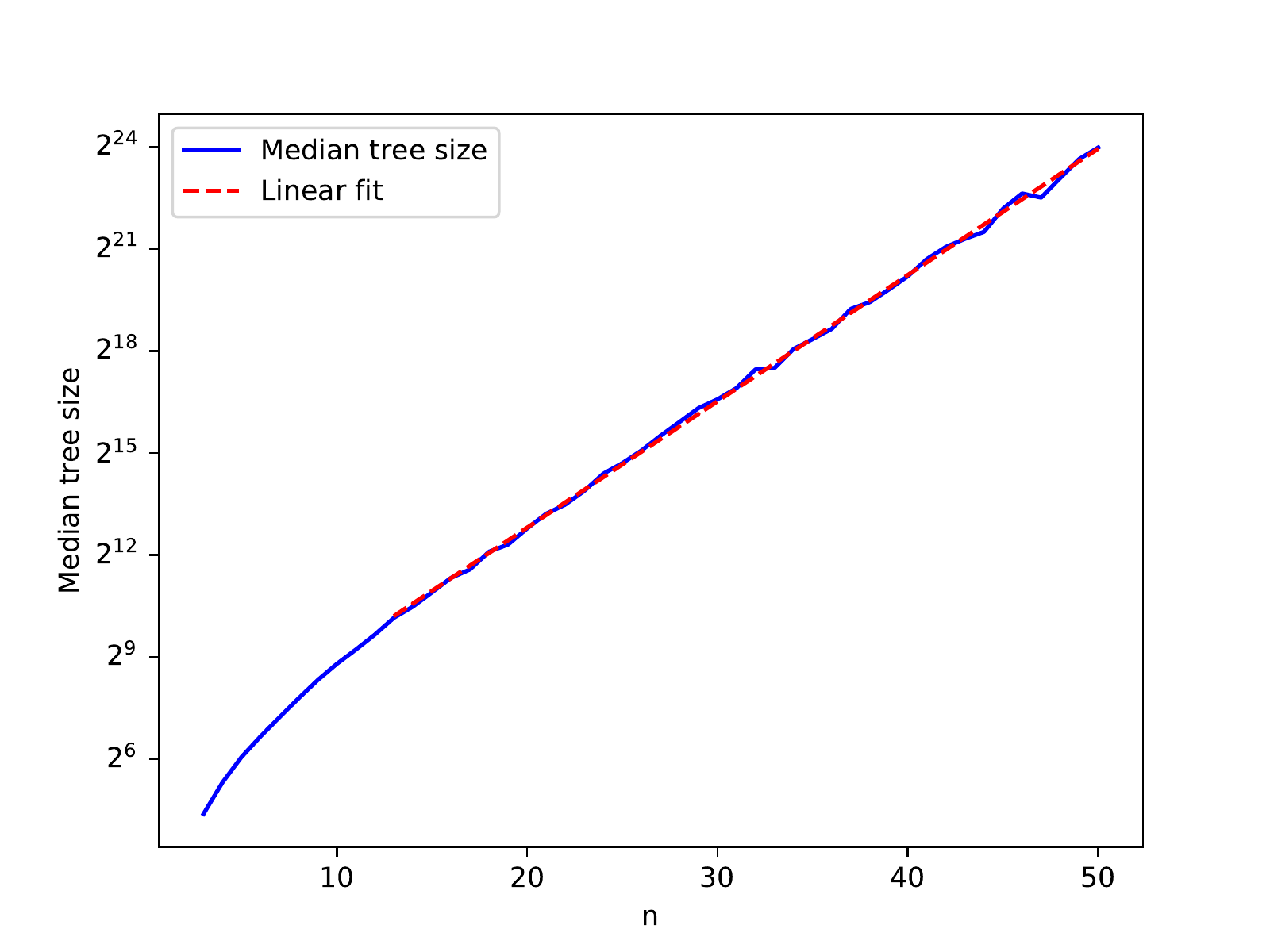}
\caption{Median tree size explored by classical branch-and-bound algorithm with depth-first strategy. 99 random instances generated for each $n$. Fit is line $y = 2^{0.371n + 5.380}$.}
\label{fig:treesizes}
\end{figure}

\begin{figure}
\includegraphics[width=0.5\textwidth]{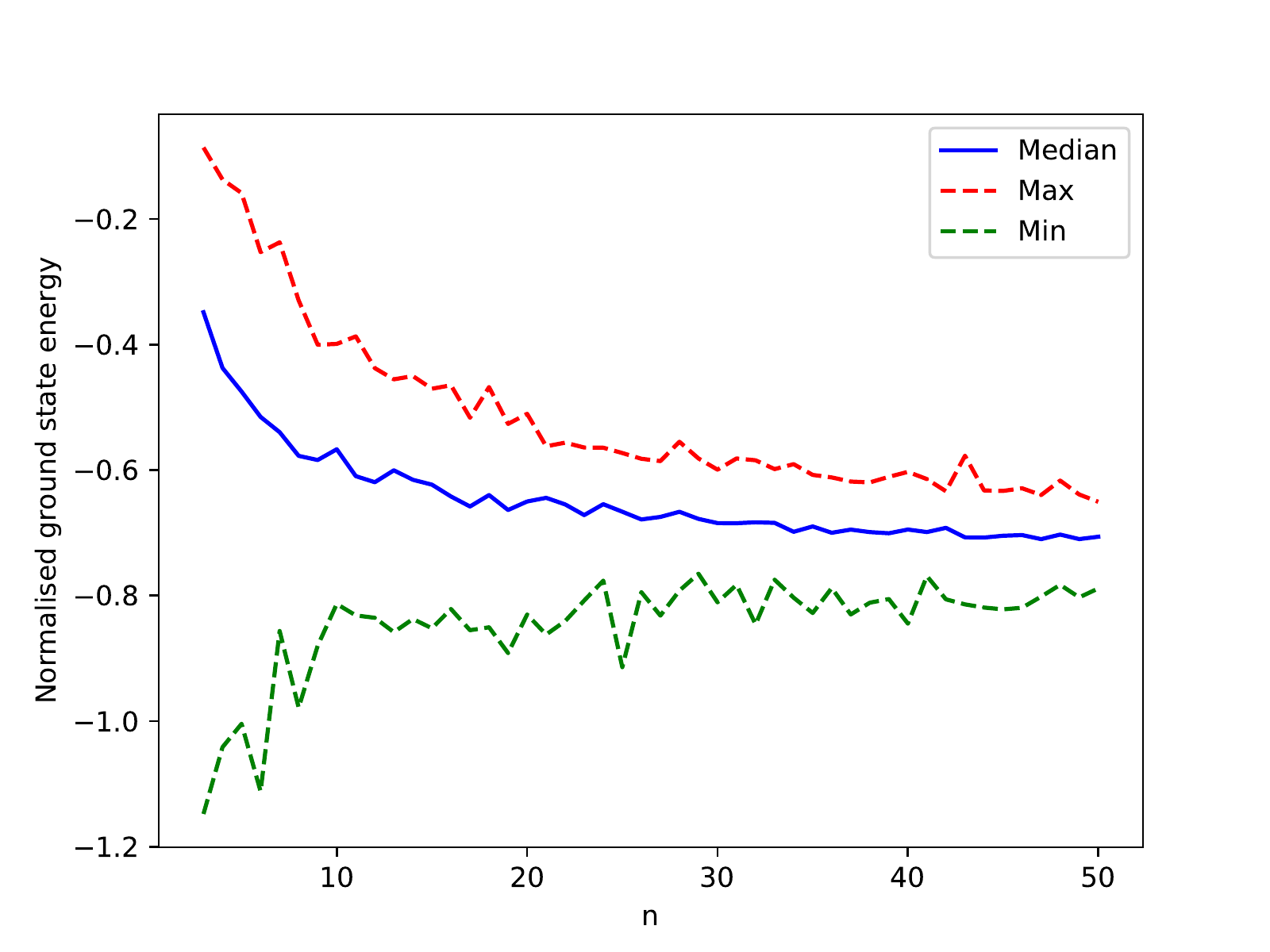}
\caption{Normalised ground state energy $E_{\min} n^{-3/2}$ of instances of the S-K model. 99 random instances generated for each $n$.}
\label{fig:energies}
\end{figure}

\end{document}